\documentclass[12pt]{article} 

\usepackage[ngerman,english]{babel}
\useshorthands{"}
\addto\extrasenglish{\languageshorthands{ngerman}}
\usepackage[utf8]{inputenc}
\usepackage[T1]{fontenc}
\usepackage{amsmath}
\usepackage{graphicx}
\usepackage{amssymb}
\usepackage{hyperref}
\usepackage{amsmath}
\usepackage{mathrsfs}
\usepackage{color}
\usepackage{subfig}
\usepackage{amsthm}

\newtheorem{lemma}{Lemma}[section]

\newtheorem{theorem}{Theorem}[section]

\newtheorem{corollary}{Corollary}[section]

\theoremstyle{definition}
\newtheorem{example}{Example}[section]

\renewcommand{\theta}{\vartheta}

\title{{\Large\bf Formulation and Proof of the Gravitational Entropy Bound}}
\author{{\bf Artem Averin$^{\textrm{a}}$\footnote{artem.averin@campus.lmu.de}}}

\begin{document}

\maketitle

\centerline{\it $^{\textrm{a}}$ Arnold--Sommerfeld--Center for Theoretical Physics,}
\centerline{\it Ludwig--Maximilians--Universit\"at, 80333 M\"unchen, Germany}

\vskip1cm
\begin{abstract}
{{
We provide a formulation and proof of the gravitational entropy bound. We use a recently given framework which expresses the measurable quantities of a quantum theory as a weighted sum over paths in the theory's phase space. If this framework is applied to a field theory on a spacetime foliated by a hypersurface $\Sigma,$ the choice of a codimension-2 surface $B$ without boundary contained in $\Sigma$ specifies a submanifold in the phase space. We show here that this submanifold is naturally restricted to obey an entropy bound if the field theory is diffeomorphism-invariant. We prove this restriction to arise by considering the quantum-mechanical sum of paths in phase space and exploiting the interplay of the commutativity of the sum with diffeomorphism-invariance. The formulation of the entropy bound, which we state and derive in detail, involves a functional $K$ on the submanifold associated to $B.$ We give an explicit construction of $K$ in terms of the Lagrangian. The gravitational entropy bound then states: For any real $\frac{\lambda}{\hbar},$ consider the set of states where $K$ takes a value not bigger than $\lambda$ and let $V$ denote the phase space volume of this set. One has then $\ln (V) \le \frac{\lambda}{\hbar}.$ Especially, we show for the Einstein-Hilbert Lagrangian in any dimension with cosmological constant and arbitrary minimally coupled matter, one has $K = \frac{A}{4G}.$ Hereby, $A$ denotes the area of $B$ in a particular state.                  
}}
\end{abstract}


\newpage

\setcounter{tocdepth}{2}
\tableofcontents
\break

\section{Introduction}
\label{Kapitel 1}

We give a rigorous formulation and proof of the gravitational entropy bound and in the technical arguments to be presented in the next chapters, we heavily rely here on the framework presented in \cite{Averin:2024its}. After providing a brief introduction to the gravitational entropy bound and pointing out the need for its formulation in this chapter, we expect the reader in the next chapters to be familiar with the formalism, language and notation of all chapters of \cite{Averin:2024its} (the chapter 4.1 presenting an example may be skipped). 

The qualitative statement of the gravitational entropy bound is motivated by black hole thermodynamics. In the following, we briefly review some of the related statements and arguments that are going to be relevant for us. They are well-known and covered by typical introductory textbooks (see, for instance, \cite{Kiritsis:2019npv} which also contains a list of the original references). 

Naively, one expects the entropy $S$ inside a region to be bounded by the entropy $S_{BH}$ of a black hole placed in this region. The gravitational entropy bound is hence expected to be of the form $S \le S_{BH}.$ Since the black hole entropy scales as an area-law in Einstein-gravity, this qualitative expectation is also known as the holographic entropy bound.

The problem with the expected bound $S \le S_{BH}$ is apparent. It is not clear how to define precisely each of its ingredients. What is an invariant meaning to fix a region? Then, what should its entropy $S$ be? And what is $S_{BH}$ for an arbitrary region? 

A proposal was conjectured in \cite{Bousso:1999xy}. Here, we follow a different approach. We use as motivation the formulation and proof given in \cite{Casini:2008cr} of the related Bekenstein bound.

Qualitatively, the Bekenstein bound $S \lesssim \frac{ER}{\hbar}$ was proposed to bound the entropy of a system with energy $E$ inside a region of size $R$ in any quantum field theory. Indeed, a precise notion of this was proven in \cite{Casini:2008cr}. What in the argument is important for us, is that for any Lorentz invariant quantum field theory a specifically reduced density matrix of the ground state can be given explicitly. This is due to the Lorentz invariance which fixes its form. This reduced density matrix then implies an entropy bound which materializes the qualitatively expected $S \lesssim \frac{ER}{\hbar}.$

Here, we want to generalize this to field theories with diffeomorphism invariance. We ask whether the additional symmetry imposes additional restrictions. For instance, is it possible to deduce some sort of entropy bound by similar reasoning as in the derivation of the Bekenstein bound?

Indeed, our result here is that this is the case for arbitrary diffeomorphism invariant field theories. \cite{Averin:2024its} precisely gives the appropriate methods for the formulation and derivation. 

In \cite{Averin:2024its}, it is shown how quantum-mechanical observables are expressed as a weighted sum of paths in the phase space of a given theory. An important point is the manifest commutativity of this sum (i.e. covariance of the functional integral). 

Furthermore, the possifold-flow introduced in \cite{Averin:2024its} provides a notion how to reduce a given density matrix. The choice of a particular reduction is related to a choice of a submanifold in the theory's phase space. Such a choice can be made by fixing a particular codimension-2 surface $B$ on the theory's spacetime.

We will demonstrate here, that one can determine the reduced density matrix with the help of a specific functional $K$ which, as we show, can be given explicitly for each $B$ on the associated submanifold in phase space. It is the commutativity of the sum of paths in phase space together with the diffeomorphism invariance that fixes the form of the reduced density matrix this way. 

Finally, we will show this reduced density matrix to imply an entropy bound.

In particular, we show that for any $\frac{\lambda}{\hbar} \in \mathbb{R}$ the phase space volume $V$ of the inverse image $K^{-1}((-\infty,\lambda])$ is by Theorem \ref{Theorem 3.1} bounded as $\ln (V) \le \frac{\lambda}{\hbar}.$ 

This provides a realization of the bound $S \le S_{BH}$ asked for at the beginning of this chapter. It naturally gives a concrete meaning of each ingredient. The ``$S$'' is interpreted as the $\ln (V).$ We will see later on why the identification of ``$S_{BH}$'' with the functional $\frac{K}{\hbar}$ is justified.

To summarize, our main result is that one can make certain global statements about the phase space of any given diffeomorphism invariant field theory. This is achieved by some kind of ``integral geometry.'' We consider the quantum-mechanical sum over paths in phase space. The covariance of this sum together with diffeomorphism invariance imposes then restrictions on the phase space. We will show here that we can understand the origin of the gravitational entropy bound this way.

The paper is organized as follows. The gravitational entropy bound is presented in Theorem \ref{Theorem 3.1}. The proof is given in chapter \ref{Kapitel 2} and \ref{Kapitel 3}. It is portioned in several lemmas in order to increase clarity. Chapter \ref{Kapitel 2} focuses on the quantum-mechanical sum over paths. In chapter \ref{Kapitel 3} the mentioned functional $K$ is constructed by using the explicit symplectic structure of diffeomorphism invariant field theories. We end with the discussion in chapter \ref{Kapitel 4}.

We work in natural units although we will occasionally restore $\hbar$ (or $G$) in order to highlight the entrance of the relevant physics.      

\section{Derivation}
\label{Kapitel 2}

As stated in the beginning, we remind here that in all chapters to follow, we will use the framework of \cite{Averin:2024its} including formalism, language and notation. 

In the following chapters, we consider a theory $(\Gamma, \Theta, H)$ in the sense of chapter 2 of \cite{Averin:2024its} with corresponding Hilbert-space $\mathcal{H}.$ Furthermore, we choose in this theory a state of the form 

\begin{equation}
| \Psi \rangle = e^{HT_E} | \chi \rangle
\label{1}
\end{equation}

for some fixed Euclidean time $T_E \in \mathbb{R}$ and a position-eigenstate $| \chi \rangle.$ In what follows, this is not a restriction on the state $| \Psi \rangle$ as every state can be written as a linear combination of states of the form \eqref{1} for some fixed time $T_E.$ 

The following lemma gives a functional integral representation of the density matrix associated to the state \eqref{1} using the notation of Appendix A in \cite{Averin:2024its}: 

\begin{lemma} \label{Lemma 2.1}
For position-eigenstates $|q_A \rangle, |q_B \rangle \in \mathcal{H},$ the density matrix elements of \eqref{1} are given by

\begin{equation} \label{2}
\begin{split}
& \langle q_A | \Psi \rangle \langle \Psi | q_B \rangle \\
=&\int_{q(-iT_E) = \chi}^{q(0)=q_A} {\operatorname{Vol}(t) e^{i\int_{-iT_E}^{0}dt \left( \Theta_{\Phi(t)} [\dot{\Phi}(t)] - H[\Phi(t)] \right)}} \cdot \\
& \int_{q(0)=q_B}^{q(iT_E)=\chi} {\operatorname{Vol}(t) e^{i\int_{0}^{iT_E}dt \left( \Theta_{\Phi(t)} [\dot{\Phi}(t)] - H[\Phi(t)] \right)}}.
\end{split}
\end{equation}

\end{lemma}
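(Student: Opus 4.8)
The plan is to exploit that the density matrix $\rho = |\Psi\rangle\langle\Psi|$ is rank one, so that its position-space matrix elements factorize into a product of two amplitudes, each of which is a Euclidean evolution kernel with the functional-integral representation set up in Appendix A of \cite{Averin:2024its}. First I would record that, since the (quantized) Hamiltonian is Hermitian and $T_E \in \mathbb{R}$, one has $\langle \Psi | = \langle \chi |\, e^{HT_E}$, whence
\[
\langle q_A | \Psi \rangle \langle \Psi | q_B \rangle \;=\; \langle q_A |\, e^{HT_E}\, | \chi \rangle \;\cdot\; \langle \chi |\, e^{HT_E}\, | q_B \rangle .
\]
No resolution of the identity need be inserted: because $|\chi\rangle\langle\chi|$ has rank one, the factorization is automatic, and the two arguments $q_A$ and $q_B$ are simply inherited from the ket and the bra, respectively.

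Next I would evaluate each factor separately. Writing $e^{HT_E} = e^{-iH\tau}$ with $\tau = iT_E$ exhibits the operator as evolution along a complex-time contour whose endpoints differ by $iT_E$ (``Euclidean length'' $T_E$), so the kernel formula
\[
\langle q_f |\, e^{-iH(t_f-t_i)}\, | q_i \rangle \;=\; \int_{q(t_i)=q_i}^{q(t_f)=q_f} \operatorname{Vol}(t)\; e^{\,i\int_{t_i}^{t_f} dt\, \bigl( \Theta_{\Phi(t)}[\dot\Phi(t)] - H[\Phi(t)] \bigr)}
\]
applies on that contour. For the first factor I would place the contour from $t_i = -iT_E$ with $q(-iT_E) = \chi$ to $t_f = 0$ with $q(0) = q_A$; for the second, from $t_i = 0$ with $q(0) = q_B$ to $t_f = iT_E$ with $q(iT_E) = \chi$. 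In both cases $t_f - t_i = iT_E$, so each kernel indeed equals $e^{HT_E}$; the point of this particular placement is that the two contours meet at the instant $t=0$, which is exactly the ``cut'' where the density matrix carries the two independent boundary data $q_A$ and $q_B$. Multiplying the two representations yields precisely \eqref{2}.

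The only step I expect to require care is the legitimacy of the path-integral formula on a complex-time contour, together with the very meaning of $e^{HT_E}|\chi\rangle$ when $H$ is unbounded; but this is exactly what is established in Appendix A of \cite{Averin:2024its}, so I would invoke it rather than re-derive it. Everything else is bookkeeping --- time-translation invariance to slide each contour into the stated position, and the observation that the two factors are genuinely decoupled, so that the measures $\operatorname{Vol}(t)$ are not glued across $t=0$.
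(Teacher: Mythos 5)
Your proposal is correct and follows essentially the same route as the paper: factorize the rank-one density matrix element into $\langle q_A|e^{HT_E}|\chi\rangle\,\langle\chi|e^{HT_E}|q_B\rangle$, represent each factor as a phase-space path integral along a complex-time contour of length $iT_E$, and position the first contour on $[-iT_E,0]$ (the paper does this via the explicit substitution $\tilde t = t-iT_E$, you via time-translation invariance, which is the same step). No gaps.
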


\begin{proof}
The position-wavefunction of \eqref{1} can be represented as 

\begin{equation} \label{3}
\begin{split}
& \langle q | \Psi \rangle \\
=& \langle q | e^{-iH(iT_E)} | \chi \rangle \\
=& \int_{q(0)=\chi}^{q(iT_E)=q} \operatorname{Vol}(t) e^{i\int_{0}^{iT_E}dt \left( \Theta_{\Phi(t)} [\dot{\Phi}(t)] - H[\Phi(t)] \right)}.
\end{split}
\end{equation}

Analogously, 

\begin{equation} \label{4}
\begin{split}
&\langle \Psi | q \rangle = \langle \chi | e^{HT_E} | q \rangle = \langle \chi | e^{-iH(iT_E)} | q \rangle \\
=& \int_{q(0)=q}^{q(iT_E)=\chi} \operatorname{Vol}(t) e^{i\int_{0}^{iT_E}dt \left( \Theta_{\Phi(t)} [\dot{\Phi}(t)] - H[\Phi(t)] \right)}.
\end{split}
\end{equation}

Performing the change of variables $\tilde{t} = t-iT_E$ in \eqref{3} and combining these expressions, we obtain

\begin{equation} \label{5}
\begin{split}
& \langle q_A | \Psi \rangle \langle \Psi | q_B \rangle \\
=& \int_{q(-iT_E)=\chi}^{q(0)=q_A} \operatorname{Vol}(\tilde{t}) e^{i\int_{-iT_E}^{0}d\tilde{t} \left( \Theta_{\Phi(\tilde{t})} \left[\frac{\delta \Phi}{d \tilde{t}} \right] - H \left[\Phi(\tilde{t})\right] \right)} \cdot \\
& \int_{q(0)=q_B}^{q(iT_E)=\chi} \operatorname{Vol}(t) e^{i\int_{0}^{iT_E}dt \left( \Theta_{\Phi(t)} [\dot{\Phi}(t)] - H[\Phi(t)] \right)} \\
=& \int_{q(-iT_E)=\chi}^{q(0)=q_A} \operatorname{Vol}(t) e^{i\int_{-iT_E}^{0}dt \left( \Theta_{\Phi(t)} [\dot{\Phi}(t)] - H[\Phi(t)] \right)} \cdot \\
& \int_{q(0)=q_B}^{q(iT_E)=\chi} \operatorname{Vol}(t) e^{i\int_{0}^{iT_E}dt \left( \Theta_{\Phi(t)} [\dot{\Phi}(t)] - H[\Phi(t)] \right)}
\end{split}
\end{equation}

where the last equation follows by renaming the time variable in the first integral and proves the assertion. 
\end{proof} 

The integration in Lemma \ref{Lemma 2.1} is visualized in Fig. \ref{Fig. 1}. 

\begin{figure}[h!]
\centering
  \includegraphics[trim = 0mm 90mm 0mm 30mm, clip, width=0.7\linewidth]{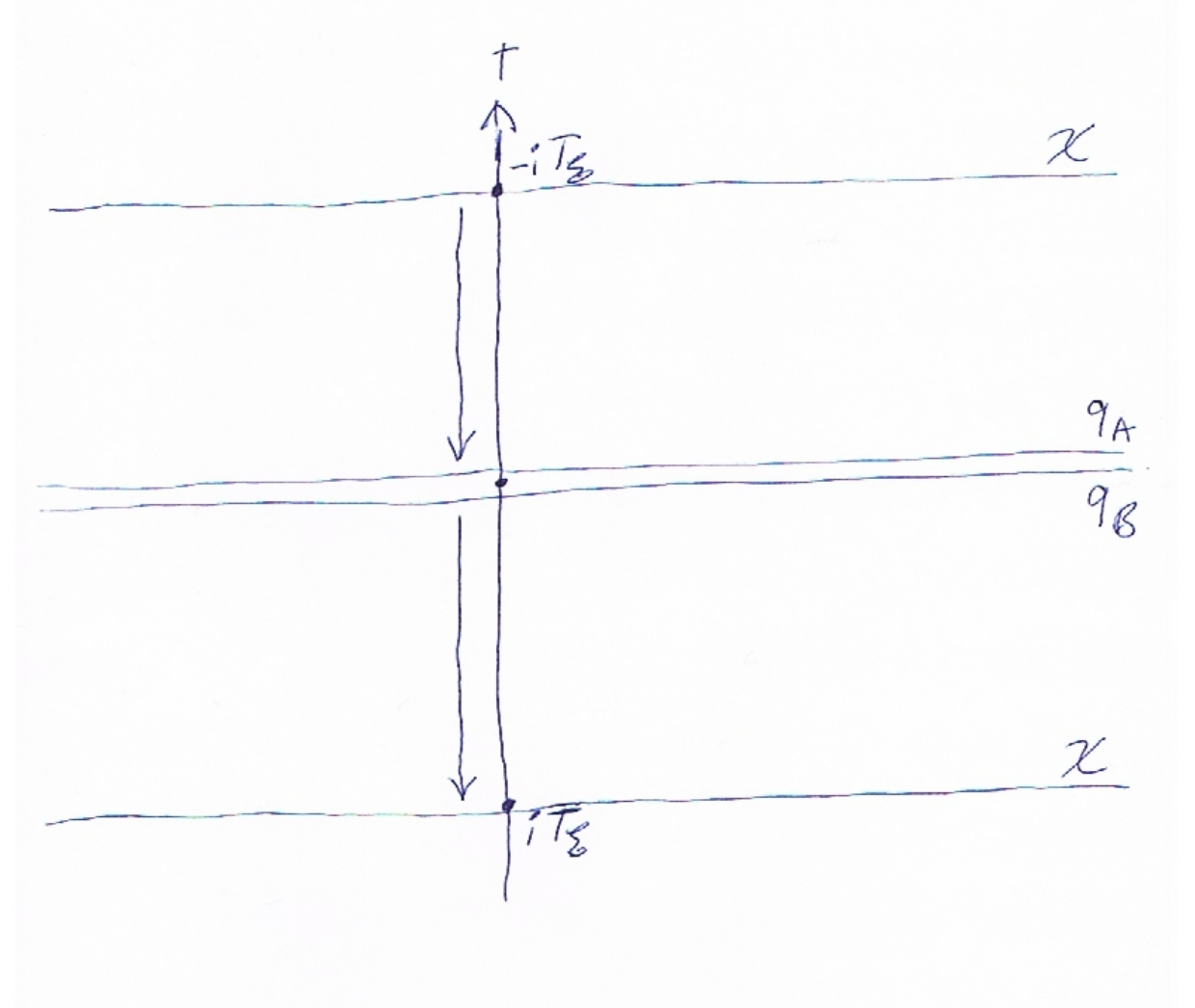}
  \caption{Pictorial representation of the integration \eqref{2}. Note the different states $q_A$ and $q_B$ at the time slice $t=0$ in each integration.}
	\label{Fig. 1}
\end{figure}

Having a functional integral expression for the density matrix, the next step is clear as motivated in the introduction. We want to use the covariance of the functional integral in order to bring the expression for the reduced density matrix into a desirable form.

Immediately, two questions appear. What is a sensible way to reduce the density matrix, i.e. how can the Hilbert-space $\mathcal{H}$ be decomposed into a tensor product? And, how can the covariance of the functional integration be exploited in a useful way?

Both questions have natural answers if one is dealing with a field theory as we have already partially encountered in \cite{Averin:2024its}. 

In addition to the requirements stated at the beginning of this chapter, we assume from now for the remaining part of the text $(\Gamma, \Theta, H)$ to be a field theory. The meaning of this was explained in detail below Fig. 3 in \cite{Averin:2024its}. We precisely require the situation stated there. That is, $(\Gamma, \Theta, H)$ is a field theory on spacetime $M = \mathbb{R} \times \Sigma$ with an associated action and Lagrange-form $L$ as in equation $(12)$ of \cite{Averin:2024its}.

Then, in this context, the answer to the first question above is given by the possifold-flow defined in \cite{Averin:2024its}. As in equation $(32)$ of \cite{Averin:2024its}, we consider in the following a possifold-flow $\partial \Sigma \to B$ for a codimension-2 surface $B$ bounding a subset $\Sigma_1 \subseteq \Sigma,$ i.e. $\partial \Sigma_1 = B,$ and $\Sigma_2$ being its complement $\Sigma_2 = \Sigma \setminus \Sigma_1.$ As explained in \cite{Averin:2024its}, the Hilbert-space then factorizes and a position-eigenstate is then of the form

\begin{equation}
|q \rangle = |q_1 \rangle |q_2 \rangle
\label{6}
\end{equation}

with $q_1$ being generalized coordinates on $\Sigma_1$ and $q_2$ being generalized coordinates on $\Sigma_2,$ respectively.

We will use the factorization \eqref{6} in order to reduce the density matrix \eqref{2}. Note that such a reduction of the density matrix naturally depends on the choice of a codimension-2 surface $B.$ 

Our explanations answer the first question raised above. The second question about the concrete use of the covariance of the functional integration is answered by the next lemma.

The idea is again natural from the field theory point of view. The spacetime $M$ is foliated by the hypersurface $\Sigma$ along the time-slices labeled by $t.$ However, it appears that a different foliation of $M$ by a hypersurface $\tilde{\Sigma}$ along a time direction $\tilde{t}$ should lead to the same results. This is precisely the content of the following lemma and the situation is illustrated in Fig. \ref{Fig. 2}.

\begin{figure}[h!]
\centering
  \includegraphics[trim = 0mm 90mm 0mm 20mm, clip, width=0.7\linewidth]{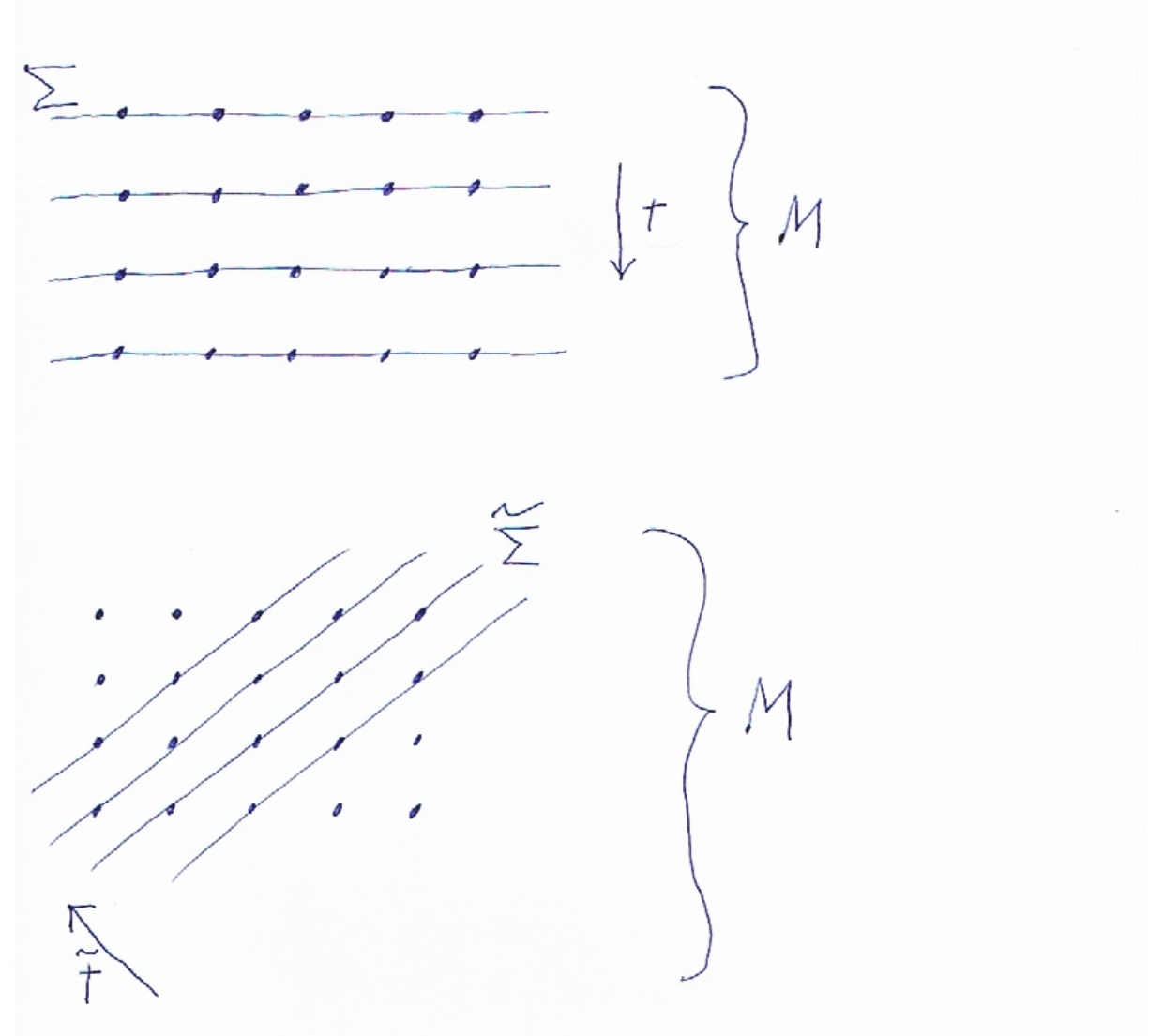}
  \caption{Different foliations of the spacetime $M.$ The spacetime $M$ is represented by a lattice before the continuum limit is taken in $\Sigma$ and in $t.$}
	\label{Fig. 2}
\end{figure}

\begin{lemma} \label{Lemma 2.2}
Let $\tilde{\Sigma}$ be a hypersurface foliating $M$ along the time-slices $\tilde{t}$ (as in Fig. \ref{Fig. 2}). The orientation of $\Sigma$ and $\tilde{\Sigma}$ should furthermore be induced by the time-directions $t$ and $\tilde{t}.$ Then, 

\begin{equation} \label{7}
\begin{split}
& \int \operatorname{Vol}(t) e^{i\int dt \left( \Theta_{\Phi(t)} [\dot{\Phi}(t)] - H[\Phi(t)] \right)} \\
=& \int \widetilde{\operatorname{Vol}}(\tilde{t}) e^{i\int d\tilde{t} \left( \tilde{\Theta}_{\Phi(\tilde{t})} \left[ \frac{\delta \Phi}{d \tilde{t}} \right] - K[\Phi(\tilde{t}),\tilde{t}] \right)}.
\end{split}
\end{equation}

Hereby,

\begin{equation} \label{8}
\begin{split}
\Theta &= \int_{\Sigma} \theta \\
\tilde{\Theta} &= \int_{\tilde{\Sigma}} \theta,
\end{split}
\end{equation}

where $\theta$ is the presymplectic potential.\footnote{See the explanations of equation $(16)$ in \cite{Averin:2024its} for the details especially concerning the meaning and fixing of ambiguities.} The functionals $H$ and $K$ are the generators of the time-evolution of $\Sigma$ and $\tilde{\Sigma}$ along $t$ and $\tilde{t}.$ $\operatorname{Vol}$ and $\widetilde{\operatorname{Vol}}$ are the associated volume forms to $\Theta$ and $\tilde{\Theta}.$
\end{lemma}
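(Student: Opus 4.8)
The plan is to reduce the claimed identity to two ingredients: the manifest foliation-independence of the spacetime action, and the covariance (manifest commutativity) of the sum over paths established in \cite{Averin:2024its}. First I would note that, by the construction of a field theory recalled below Fig.~3 and in equation~$(12)$ of \cite{Averin:2024its}, the exponent on the left-hand side of \eqref{7} is nothing but the spacetime action,
\[
\int dt\left( \Theta_{\Phi(t)}[\dot\Phi(t)] - H[\Phi(t)] \right) = \int_M L ,
\]
the identification using precisely $\Theta = \int_\Sigma \theta$ with $\theta$ the presymplectic potential and the $3{+}1$ split adapted to the slicing $\Sigma$, $t$. The right-hand side $\int_M L$ is a purely geometric object on $M$ and carries no memory of how $M$ was sliced.

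Next I would rewrite $\int_M L$ in the canonical form adapted to the new foliation. Slicing $M$ by $\tilde\Sigma$ along $\tilde t$ and carrying out the same $3{+}1$ split — the Legendre-type manipulation around equation~$(16)$ of \cite{Averin:2024its} — now with respect to $\tilde t$ produces $\int d\tilde t\,( \tilde\Theta_{\Phi(\tilde t)}[\frac{\delta\Phi}{d\tilde t}] - K[\Phi(\tilde t),\tilde t])$, where $\tilde\Theta = \int_{\tilde\Sigma}\theta$ by \eqref{8} and $K$ is by construction the functional generating $\tilde t$-translations of $\tilde\Sigma$. The explicit $\tilde t$-dependence of $K$ is expected: unless $\tilde\Sigma$ is adapted to a symmetry, the lapse and shift of $\tilde\Sigma$ relative to $\Sigma$ vary with $\tilde t$, hence so does the generator. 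The hypothesis that the orientations of $\Sigma$ and $\tilde\Sigma$ are induced by $t$ and $\tilde t$ is what guarantees the two $3{+}1$ splits are performed with compatible signs, so that no spurious orientation factor is introduced.

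The remaining — and main — step is the equality of measures, $\operatorname{Vol}(t) = \widetilde{\operatorname{Vol}}(\tilde t)$ under the integral sign. Here I would work with the lattice regularization of $M$ drawn in Fig.~\ref{Fig. 2}, before the continuum limits in $\Sigma$ and in $t$ are taken. On the lattice both foliations merely reorganize the same finite set of field variables into a product over $t$-columns or over $\tilde t$-columns; I would show that the volume form attached to $\Theta$ on a $t$-slice and the one attached to $\tilde\Theta$ on a $\tilde t$-slice both descend from the single Liouville-type density built from $\theta$ on $M$, so that the two iterated products coincide as one and the same finite-dimensional integrand. This is exactly the covariance of the sum over paths of \cite{Averin:2024its}, specialized to the present reslicing; taking the continuum limit then yields \eqref{7}. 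I expect this measure-matching to be the real obstacle: one must check that the discretizations can be chosen simultaneously compatible with both foliations and that no anomalous Jacobian survives the continuum limit — which is precisely where the covariance result of \cite{Averin:2024its} does the work.
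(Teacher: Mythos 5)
Your proposal follows essentially the same route as the paper: both identify the exponent on each side with the one spacetime action re-expressed in the two slicings (the paper invokes Jacobi's theorem for the coordinate change and the Legendre-type step around eq.\ (16) of \cite{Averin:2024its} to get $\Theta,\tilde\Theta,H,K$), and both conclude by matching the path-integral measures on the lattice of Fig.\ \ref{Fig. 2}, arguing that each side sums over the same set of canonical variables so the Liouville-type measures built from $\Theta$ and $\tilde\Theta$ coincide. Your explicit flagging of the measure-matching as the substantive step is, if anything, slightly more careful than the paper's brief closing remark, but the argument is the same.
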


\begin{proof}
Let $(t,x)$ and $(\tilde{t}, \tilde{x})$ be two positively oriented coordinate systems of $M$ with $x$ and $\tilde{x}$ denoting the coordinates on $\Sigma$ and $\tilde{\Sigma}$ for fixed time $t$ and $\tilde{t},$ respectively. In this proof, we refer to the notation of the discussion below Fig. 3 of \cite{Averin:2024its}. Especially, we imagine the spacetime $M$ occasionally as a lattice as in Fig. \ref{Fig. 2} in order to make the point clear.

We can apply Jacobi's theorem to the action integral

\begin{equation}
\int dt dx \mathcal{L} = \int d\tilde{t} d\tilde{x} \tilde{\mathcal{L}}
\label{9}
\end{equation}

with $\mathcal{L}$ and $\tilde{\mathcal{L}}$ being the Lagrangian density with respect to $(t,x)$ and $(\tilde{t},\tilde{x}).$ For both sides of the equation \eqref{9}, we can apply the argument of \cite{Averin:2024its} starting below equation $(13)$ and leading to equation $(16)$ there. The latter equation gives the canonical 1-form in each case.

For the left-hand-side of \eqref{9}, we obtain

\begin{equation}
\Theta = \sum_{x \in \Sigma} p_x \delta q_x = \int_\Sigma \theta
\label{10}
\end{equation}

where $q_x$ denote the generalized coordinates on $\Sigma$ and $p_x$ are the generalized momenta defined with respect to the evolution of $\Sigma$ along $t.$ 

Analogously, for the right-hand-side of \eqref{9}, we obtain

\begin{equation}
\tilde{\Theta} = \sum_{\tilde{x} \in \tilde{\Sigma}} \tilde{p}_{\tilde{x}} \delta \tilde{q}_{\tilde{x}} = \int_{\tilde{\Sigma}} \theta
\label{11}
\end{equation}

where $\tilde{q}_{\tilde{x}}$ denote the generalized coordinates on $\tilde{\Sigma}$ and $\tilde{p}_{\tilde{x}}$ are the generalized momenta defined with respect to the evolution of $\tilde{\Sigma}$ along $\tilde{t}.$ 

Given a path $(q_x(t),p_x(t)),$ we can think of it as a configuration of canonical coordinates in Fig. \ref{Fig. 2} where the evolution of the $(q,p)$ is considered on $\Sigma$ along $t.$ Similarly, a path $(\tilde{q}_{\tilde{x}} (\tilde{t}), \tilde{p}_{\tilde{x}} (\tilde{t}))$ can be interpreted as a configuration in Fig. \ref{Fig. 2} where the evolution of the $(\tilde{q}, \tilde{p})$ is considered along the foliation with $\tilde{\Sigma}.$ 

For both ways of the evolution, we can construct the action functionals. For the evolution of $\Sigma$, it reads

\begin{equation}
\int dt \sum_{i} p_i \frac{\delta q_i}{dt} - \int dt H[q_i(t),p_i(t)]
\label{12}
\end{equation}

while for $\tilde{\Sigma}$, it is

\begin{equation}
\int d \tilde{t} \sum_i \tilde{p}_i \frac{\delta \tilde{q}_i}{d \tilde{t}} - \int d \tilde{t} K[\tilde{q}_i (\tilde{t}), \tilde{p}_i (\tilde{t}),\tilde{t}].
\label{13}
\end{equation}

The functional $K$ is here the generator of the evolution of $\tilde{\Sigma}$ along $\tilde{t}.$ Because \eqref{12} and \eqref{13} are derived from the same action, for a path $(q_i(t),p_i(t))$, there is a path $(\tilde{q}_i (\tilde{t}), \tilde{p}_i (\tilde{t}))$ such that \eqref{12} and \eqref{13} are equal  

\begin{equation}
\begin{split}
& \int dt \sum_i p_i \frac{\delta q_i}{dt} - \int dt H[q_i(t),p_i(t)] \\
=& \int d \tilde{t} \sum_i \tilde{p}_i \frac{\delta \tilde{q}_i}{d \tilde{t}} - \int d \tilde{t} K[\tilde{q}_i (\tilde{t}), \tilde{p}_i (\tilde{t}),\tilde{t}].
\end{split}
\end{equation}

Using \eqref{10} and \eqref{11}, this equation shows the equality of the individual terms in \eqref{7} under the mentioned map of $(q_i(t),p_i(t))$ to $(\tilde{q}_i (\tilde{t}), \tilde{p}_i (\tilde{t}))$. Furthermore, since the sum over all positions and momenta is taken on each side of \eqref{7}, the equality follows.     
\end{proof}

Our strategy is to use Lemma \ref{Lemma 2.2} to rewrite the functional integral expression for the reduced density matrix implied by \eqref{2}. For this, we need to choose a proper foliation of spacetime such that the generator for the evolution along this foliation is known. Field theories especially suited for this purpose seem to be those possessing local symmetries. Here, we choose to focus on diffeomorphism invariance. 

Hence, from now on, we require the field theory $(\Gamma, \Theta, H)$ to be diffeomorphism invariant in the sense that the Lagrange-form $L$ be diffeomorphism-invariant. A precise definition and further analysis of this is given in \cite{Iyer:1994ys}. We will elaborate more on this requirement in the next chapter. For the arguments here, these details will not be important. 

For the mentioned possifold-flow $\partial \Sigma \to B$ and associated factorization \eqref{6}, the diffeomorphism invariance allows us to obtain an expression for the reduced density matrix as stated in the following lemma. 

\begin{lemma} \label{Lemma 2.3}
For the possifold-flow $\partial \Sigma \to B,$ there exists a generator $G$ on $(\Gamma^{(B)}, \Theta^{(B)})$ such that the reduced density matrix elements of the state $| \Psi \rangle$ in \eqref{1} are given by

\begin{equation} \label{14}
\begin{split}
&\int \mathcal{D} q_2 ( \langle q_{1,A} | \langle q_2 | ) |\Psi \rangle \langle \Psi | ( |q_{1,B} \rangle |q_2 \rangle ) \\
=& \int_{q_1 (\alpha = 0)=q_{1,B} \atop q(\alpha=\pi, t=\pm iT_E)=\chi}^{q_1(\alpha=2\pi)=q_{1,A}} \operatorname{Vol}^{(B)}(\alpha) e^{i\int_{0}^{2\pi}d \alpha \left( \Theta^{(B)}_{\Phi(\alpha)} \left[\frac{\delta \Phi}{d \alpha} \right] - G[\Phi(\alpha)] \right)}.
\end{split}
\end{equation}  
\end{lemma}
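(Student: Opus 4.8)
The plan is to take the functional-integral representation of the full density matrix from Lemma \ref{Lemma 2.1}, perform the partial trace over $q_2$ explicitly at the level of paths, and then recognise the resulting sewn path integral as the evolution along a new foliation to which Lemma \ref{Lemma 2.2} applies.

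First I would insert the factorisation \eqref{6} into \eqref{2}, writing $q_A = (q_{1,A},q_{2,A})$ and $q_B = (q_{1,B},q_{2,B})$, and then carry out $\int \mathcal{D}q_2$ by setting $q_{2,A} = q_{2,B} = q_2$ and integrating over the common value. In the lattice picture of \cite{Averin:2024its} this identifies the field data on the $\Sigma_2$-part of the $t=0$ slice coming from the two factors of \eqref{2} and sums over it. Geometrically the two path-integral ``halves'' --- the one over Euclidean times $t\in[-iT_E,0]$ ending on $q_A$ (the $\langle q_A|\Psi\rangle$ piece), and the one over $t\in[0,iT_E]$ starting on $q_B$ (the $\langle\Psi|q_B\rangle$ piece) --- get glued along $\Sigma_2\times\{t=0\}$, while along $\Sigma_1\times\{t=0\}$ they stay disjoint: the two faces of this ``slit'' carry the boundary data $q_{1,B}$ and $q_{1,A}$, the $t=\pm iT_E$ ends carry $\chi$, and the slit is bounded by $B=\partial\Sigma_1$.

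Next I would introduce the foliation $\tilde\Sigma$ that fans out from $B$: a one-parameter family $\tilde\Sigma_\alpha$, $\alpha\in[0,2\pi]$, of hypersurfaces with $\partial\tilde\Sigma_\alpha = B$ that sweeps the glued spacetime exactly once, starting at $\alpha=0$ on the $q_{1,B}$-face of the slit, passing at $\alpha=\pi$ through the $t=\pm iT_E$ boundary (where the data is $\chi$), and ending at $\alpha=2\pi$ on the $q_{1,A}$-face. Since the Lagrange-form is diffeomorphism-invariant, this foliation is admissible, so Lemma \ref{Lemma 2.2} lets me rewrite the sewn path integral as the $\alpha$-evolution along $\tilde\Sigma$, with canonical $1$-form $\Theta^{(B)} = \int_{\tilde\Sigma}\theta$ on the possifold phase space $(\Gamma^{(B)},\Theta^{(B)})$ attached to the possifold-flow $\partial\Sigma\to B$, and with some generator $G$ of $\alpha$-translations. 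Reading off the endpoint data of the $\alpha$-paths reproduces exactly the boundary conditions displayed in \eqref{14}, which proves the assertion.

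The main obstacle I expect is the geometric bookkeeping of the gluing: one must verify that after identifying the $\Sigma_2$-data the glued spacetime admits a smooth fan foliation around $B$ with total angular range exactly $[0,2\pi]$ (no deficit or surplus), that the measure $\operatorname{Vol}(t)\,\mathcal{D}q_2$ reorganises into $\operatorname{Vol}^{(B)}(\alpha)$, and --- most importantly --- that the $\alpha$-generator $G$ produced by Lemma \ref{Lemma 2.2} genuinely descends to a functional on $(\Gamma^{(B)},\Theta^{(B)})$. This last point is where diffeomorphism invariance and the possifold structure of \cite{Averin:2024its} do the essential work, since for a generic theory the generator of the $B$-adapted foliation need not be expressible on the phase space associated to $B$ alone.
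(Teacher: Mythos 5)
Your overall route is the same as the paper's: trace over $q_2$ directly at the level of the path integral (gluing the two halves along $\Sigma_2$ at $t=0$ while leaving a slit along $\Sigma_1$ whose faces carry $q_{1,B}$ and $q_{1,A}$, as in Fig.~\ref{Fig. 3}), then reslice the glued spacetime by an angular foliation $\alpha\in[0,2\pi]$ fanning around $B$ (Fig.~\ref{Fig. 4}) and invoke Lemma~\ref{Lemma 2.2}. However, the substance of the paper's proof lies exactly in the points you list as ``expected obstacles'' and then leave open, so as it stands the proposal has a genuine gap rather than a complete argument. Lemma~\ref{Lemma 2.2} only helps once the generator of the $\alpha$-evolution is specified, and the paper does this concretely: it argues that in a diffeomorphism-invariant theory the interior shape of the $\alpha$-slices is pure gauge, so $G$ is fixed by requiring $G\,d\alpha$ to generate a diffeomorphism which, in locally flat coordinates at each point of $B$, is a rotation by the angle $d\alpha$ in the plane normal to $B$ (Fig.~\ref{Fig. 5}); the existence of such a diffeomorphism, and the fact that the resulting $G$ is a well-defined functional on $(\Gamma^{(B)},\Theta^{(B)})$ depending only on data near $B$, are established later via the Noether-charge construction in the proof of Theorem~\ref{Theorem 3.1}. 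Your closing remark that $G$ ``need not be expressible on the phase space associated to $B$ alone'' identifies the right issue but does not resolve it, and without that resolution the rewriting of \eqref{15} as \eqref{16} is not justified.

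Two further points that the paper needs and your proposal omits. First, the $\alpha=\pi$ slice runs along the spacetime boundary (the $t=\pm iT_E$ ends are identified), and for $B$ of more general topology the slice covers parts of the spacetime twice (Fig.~\ref{Fig. 6}); the paper argues that no canonical coordinates are lost or double-counted precisely because $G$ depends only on the canonical data in the vicinity of $B$, which is the same property established in Theorem~\ref{Theorem 3.1}. Second, specifying $G$ only through its symplectic action leaves the additive ambiguity $G\to G+c$, so \eqref{15} and \eqref{16} a priori agree only up to a normalization constant; the paper fixes $c$ by demanding that the trace of \eqref{16} equal $1$, and only then does the asserted equality \eqref{14} hold exactly. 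Your proposal would need both of these steps (and, centrally, the explicit specification of $G$ as the near-$B$ rotation generator) to count as a proof of the lemma.
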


\begin{proof}
Using the factorization \eqref{6} and tracing over $q_2,$ we obtain from \eqref{2} for the reduced density matrix elements

\begin{equation} \label{15}
\begin{split}
&\int \mathcal{D} q_2 ( \langle q_{1,A} | \langle q_2 | ) |\Psi \rangle \langle \Psi | ( |q_{1,B} \rangle |q_2 \rangle ) \\
=& \int \mathcal{D} q_2 \int_{q(-iT_E) = \chi}^{q_1(0)=q_{1,A},q_2(0)=q_2} {\operatorname{Vol}(t) e^{i\int_{-iT_E}^{0}dt \left( \Theta_{\Phi(t)} [\dot{\Phi}(t)] - H[\Phi(t)] \right)}} \cdot \\
& \int_{q_1(0)=q_{1,B},q_2(0)=q_2}^{q(iT_E)=\chi} {\operatorname{Vol}(t) e^{i\int_{0}^{iT_E}dt \left( \Theta_{\Phi(t)} [\dot{\Phi}(t)] - H[\Phi(t)] \right)}}.
\end{split}
\end{equation}

The functional integration is visualized in Fig. \ref{Fig. 3}.

\begin{figure}[h!]
\centering
  \includegraphics[trim = 0mm 90mm 0mm 20mm, clip, width=0.7\linewidth]{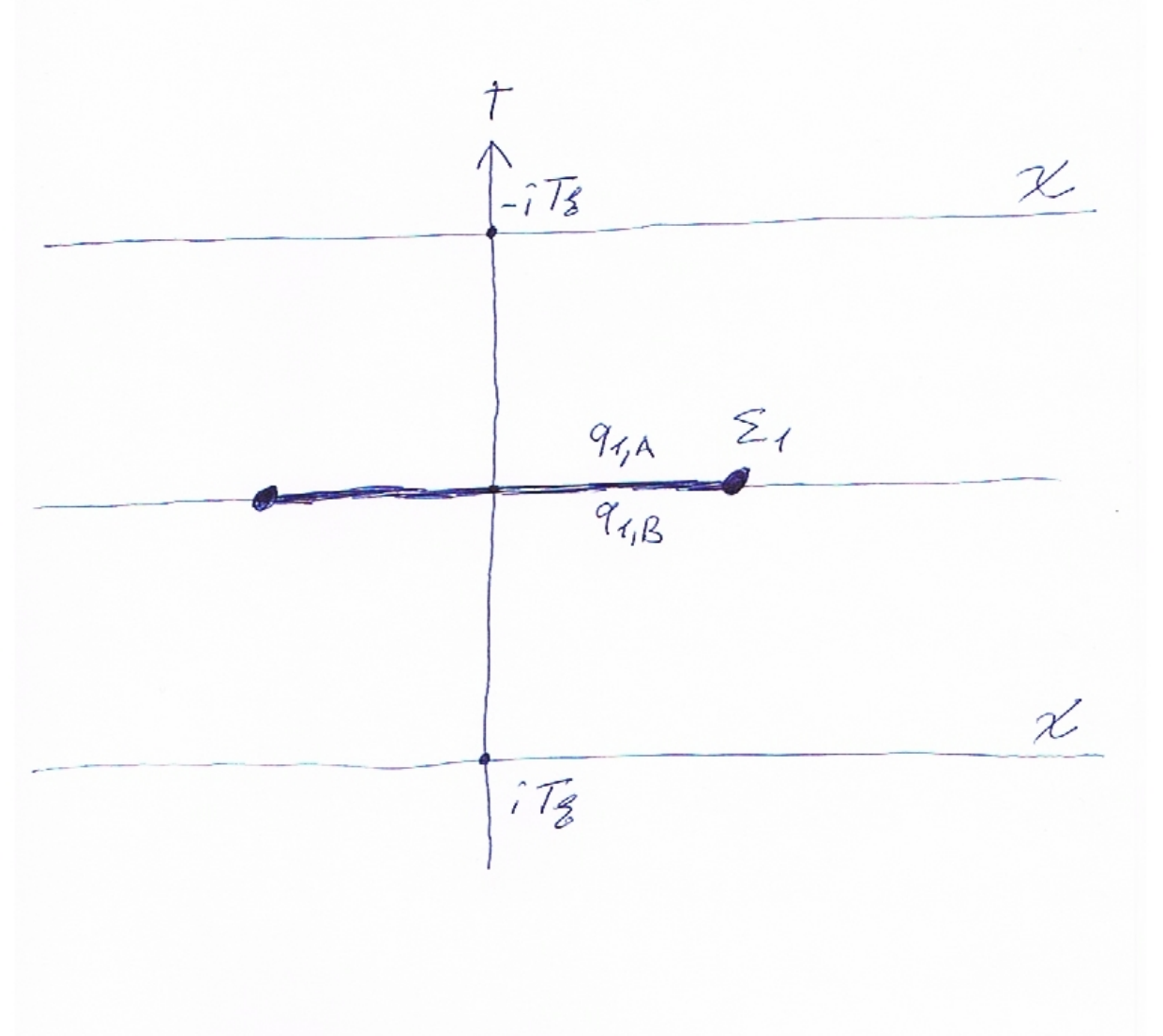}
  \caption{Functional integration \eqref{15}. The thick line denotes the surface $\Sigma_1$ at $t=0$ where the states are fixed as $q_{1,A}$ and $q_{1,B}.$ The thick endpoints illustrate the boundary $\partial \Sigma_1 = B.$}
	\label{Fig. 3}
\end{figure}

We now would like to apply Lemma \ref{Lemma 2.2} to this integration. For this, we need a suited new foliation of the spacetime in Fig. \ref{Fig. 3}. A natural choice is depicted in Fig. \ref{Fig. 4}. We choose to label the new foliation by the new time variable $\alpha \in [0, 2 \pi].$ 

\begin{figure}[h!]
\centering
  \includegraphics[trim = 0mm 90mm 0mm 20mm, clip, width=0.7\linewidth]{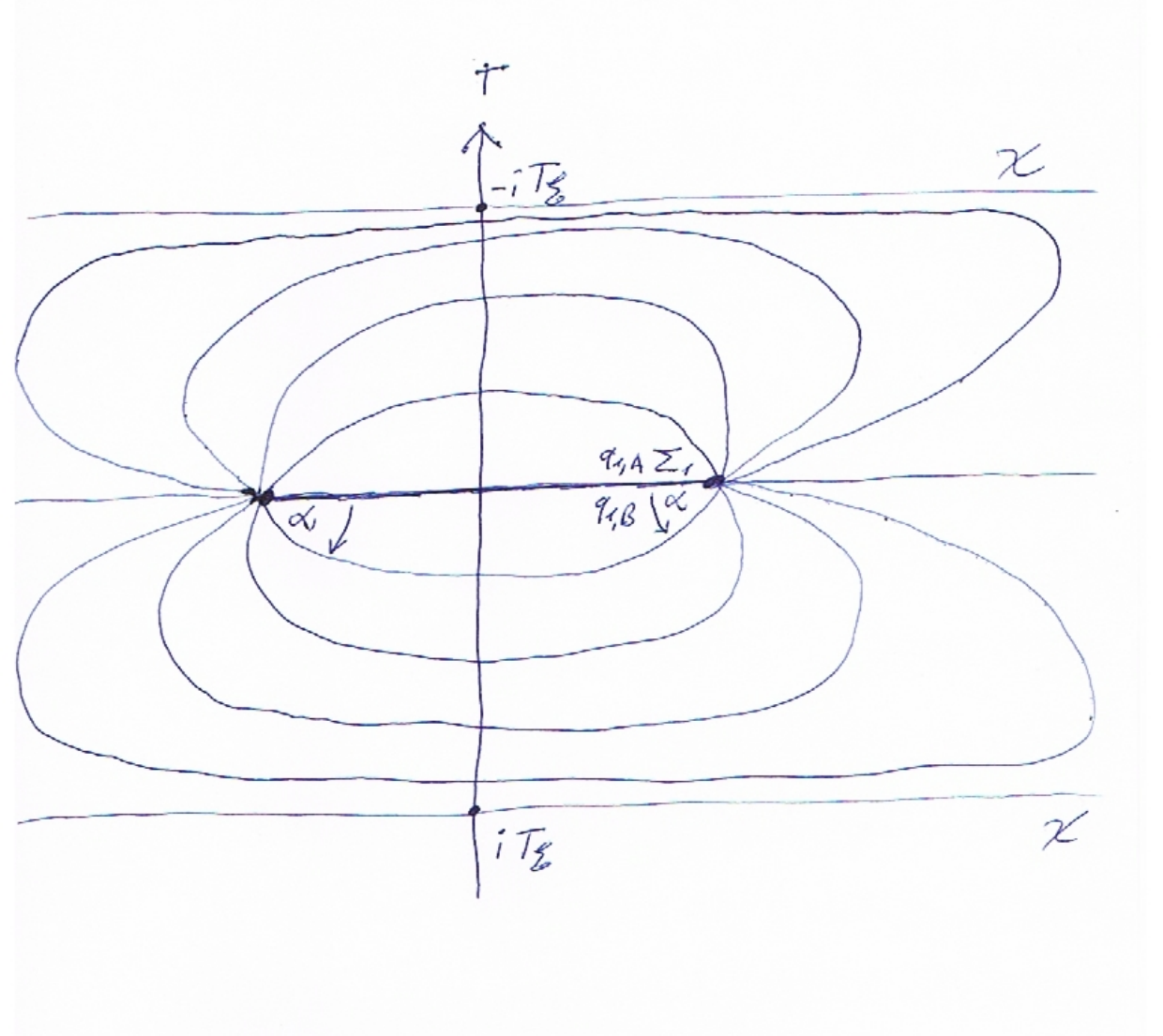}
  \caption{A different foliation for the functional integration \eqref{15} in Fig. \ref{Fig. 3}.}
	\label{Fig. 4}
\end{figure}

In order to apply Lemma \ref{Lemma 2.2}, we would need to find the generator $G$ corresponding to the evolution of the foliation shown in Fig. \ref{Fig. 4}. Here is where we can make use of the diffeomorphism invariance. 

Let $\Phi \in \Gamma^{(B)}$ be a state on $\Sigma_1.$ What is the role the generator $G$ should accomplish on $(\Gamma^{(B)}, \Theta^{(B)})$? Consider the situation in Fig. \ref{Fig. 5}.

\begin{figure}[h!]
\centering
  \includegraphics[trim = 0mm 90mm 0mm 20mm, clip, width=0.7\linewidth]{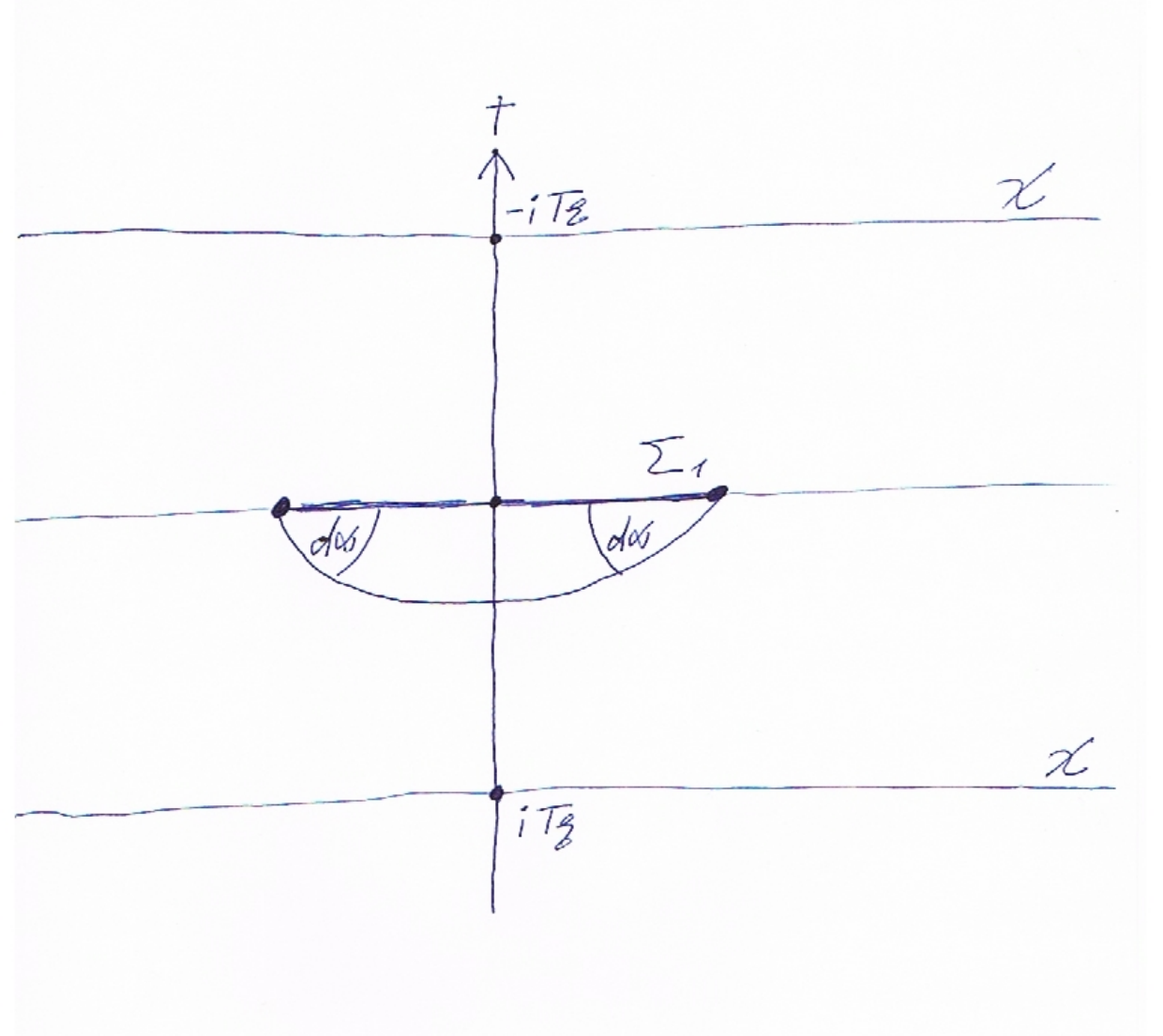}
  \caption{Action of the generator $G.$}
	\label{Fig. 5}
\end{figure}

In Fig. \ref{Fig. 5} an infinitesimal deformation of $\Sigma_1$ is shown. Such a deformation can be described by a suited diffeomorphism. Following the idea illustrated in Fig. \ref{Fig. 4}, $G$ should then be the generator of this diffeomorphism. 

However, in diffeomorphism-invariant field theories, the interior deformations in Fig. \ref{Fig. 5} correspond to interior changes of the diffeomorphism and hence to gauge redundancies. In other words, the specific form of the hypersurfaces in their interior in Fig. \ref{Fig. 4} does not matter. 

Following Fig. \ref{Fig. 4}, we then fix the action of $G$ on the state $\Phi$ as follows. We require $G \cdot d \alpha$ to generate a diffeomorphism that in a local flat coordinate system at each point at the boundary $B$ looks like a rotation of angle $d \alpha$ in the plane normal to $B$ (see Fig. \ref{Fig. 5}).

We have to show the existence of such a diffeomorphism. In what follows, we take the existence as given. We will show it in the proof of Theorem \ref{Theorem 3.1} where we will give an explicit construction of $G.$

Then, using Lemma \ref{Lemma 2.2}, we would like to write \eqref{15} as 

\begin{equation} \label{16}
\int_{q_1 (\alpha = 0)=q_{1,B} \atop q(\alpha=\pi, t=\pm iT_E)=\chi}^{q_1(\alpha=2\pi)=q_{1,A}} \operatorname{Vol}^{(B)}(\alpha) e^{i\int_{0}^{2\pi}d \alpha \left( \Theta^{(B)}_{\Phi(\alpha)} \left[\frac{\delta \Phi}{d \alpha} \right] - G[\Phi(\alpha)] \right)}.
\end{equation}

However, before we can finish this conclusion, we have to justify some remaining points.

Note that in \eqref{16}, the slice $\alpha = \pi$ captures the boundary of spacetime $\partial M.$ Hereby, the lower circle passing $t = iT_E$ and the upper circle passing $t = -iT_E$ have been identified (see Fig. \ref{Fig. 4}). No summation over canonical coordinates is lost in this identification. The reason is that the generator $G$ depends, as we will see in the next chapter, only on canonical coordinates in the vicinity of $B$ on the slice $\alpha = \pi.$ 

Our argument so far assumes a specific topology (interval or sphere) for $B.$ What happens if the topology of $B$ is more general?

In Fig. \ref{Fig. 6}, we apply the procedure of Fig. \ref{Fig. 4} to a $\Sigma_1$ with a different topology. 

\begin{figure}[h!]
\centering
  \includegraphics[trim = 0mm 90mm 0mm 20mm, clip, width=0.7\linewidth]{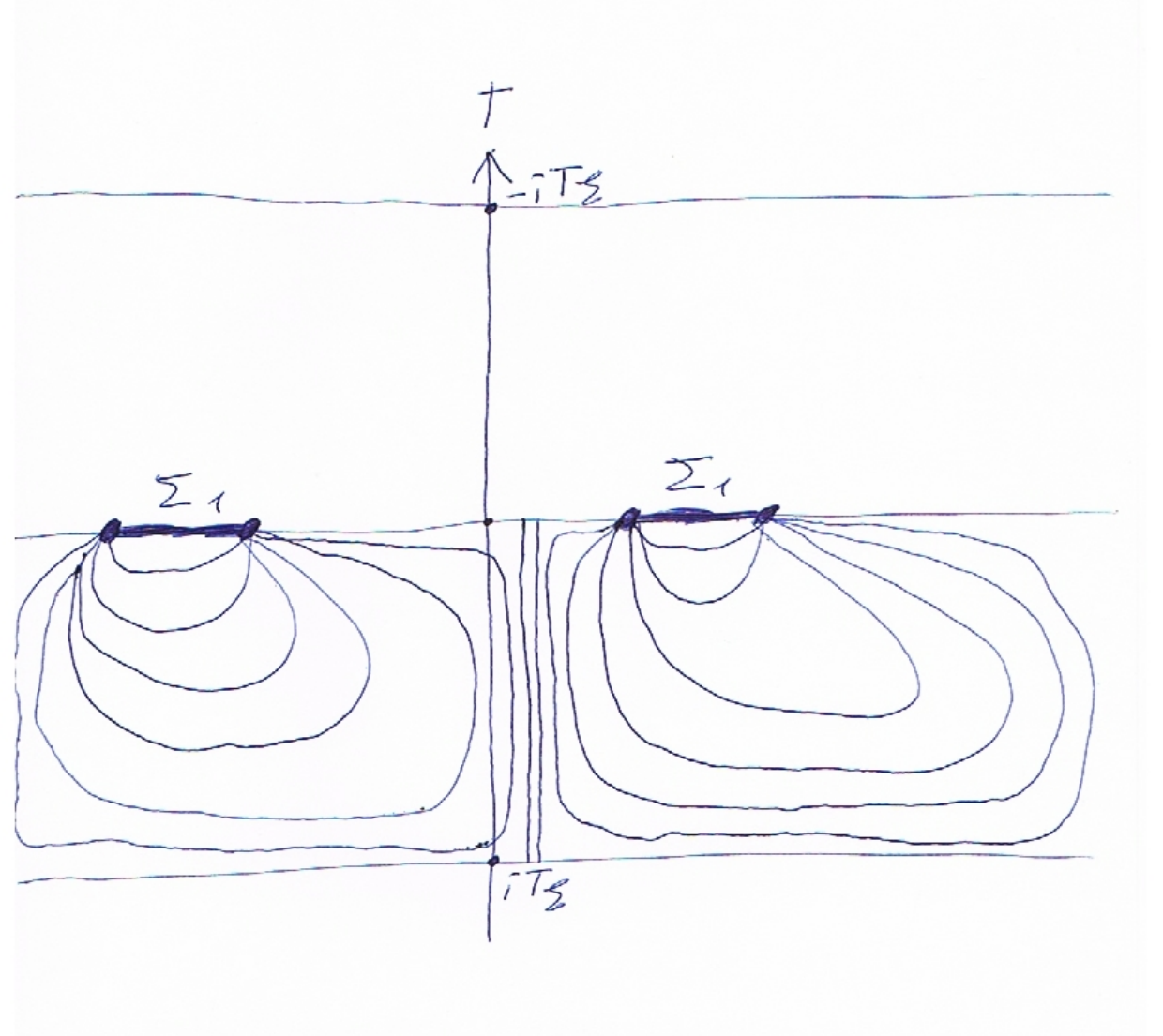}
  \caption{The procedure of Fig. \ref{Fig. 4} for $\Sigma_1$ possessing a different topology.}
	\label{Fig. 6}
\end{figure} 

As shown in Fig. \ref{Fig. 6}, the slice at $\alpha = \pi$ covers parts of the spacetime twice. This is due to the hole in $\Sigma_1.$ However, there occurs no double-counting of canonical coordinates by the same argument as for $\partial M.$ 

Finally, we have specified $G$ by its action as a generator. This determines $G$ only up to a constant. This is because $G$ and $G+c$ generate the same symplectic symmetry for each c-number $c.$ In consequence, \eqref{15} and \eqref{16} are only equal up to a normalization constant. However, $c$ is fixed by the fact that \eqref{15} is a density matrix, i.e. the trace of \eqref{15} is equal to $1.$ Imposing this requirement on \eqref{16} fixes $c$ and ensures the equality of \eqref{15} and \eqref{16} which was to show.   
\end{proof}

Lemma \ref{Lemma 2.3} expresses the reduced density matrix of an arbitrary state $| \Psi \rangle$ as a functional integral involving the generator of the diffeomorphism shown in Fig. \ref{Fig. 4} and Fig. \ref{Fig. 5}. In order to derive the gravitational entropy bound, we apply Lemma \ref{Lemma 2.3} to the particular case of the lowest-energy eigenstate $| \Omega \rangle.$ This is the content of the next lemma.

\begin{lemma} \label{Lemma 2.4}
Let $| \Omega \rangle$ denote the lowest-energy eigenstate. For the possifold-flow $\partial \Sigma \to B,$ there exists a generator $G$ on $(\Gamma^{(B)}, \Theta^{(B)})$ such that the reduced density matrix elements of $| \Omega \rangle$ are given by 

\begin{equation} \label{17}
\begin{split}
&\int \mathcal{D} q_2 ( \langle q_{1,A} | \langle q_2 | ) |\Omega \rangle \langle \Omega | ( |q_{1,B} \rangle |q_2 \rangle ) \\
=& \int_{q_1 (\alpha = 0)=q_{1,B}}^{q_1(\alpha=2\pi)=q_{1,A}} \operatorname{Vol}^{(B)}(\alpha) e^{i\int_{0}^{2\pi}d \alpha \left( \Theta^{(B)}_{\Phi(\alpha)} \left[\frac{\delta \Phi}{d \alpha} \right] - G[\Phi(\alpha)] \right)} \\
=& \langle q_{1,A} | e^{-K} | q_{1,B} \rangle
\end{split}
\end{equation}

with 

\begin{equation}
K = 2 \pi i G.
\label{18}
\end{equation}
\end{lemma}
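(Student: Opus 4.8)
The plan is to build on Lemma \ref{Lemma 2.3} by specializing the state $|\Psi\rangle$ to the lowest-energy eigenstate $|\Omega\rangle$ and then taking the limit $T_E \to \infty$. First I would recall that $|\Omega\rangle$ is obtained (up to normalization) as $\lim_{T_E \to \infty} e^{-H'(\cdot)}|\chi\rangle$ for generic $|\chi\rangle$ — in the Lorentzian-rotated language of \eqref{1}, $|\Psi\rangle = e^{HT_E}|\chi\rangle$ projects onto the lowest eigenvalue of $H$ as $T_E\to\infty$, since the other contributions are exponentially suppressed. Hence applying Lemma \ref{Lemma 2.3} to $|\Psi\rangle$ and sending $T_E \to \infty$ yields the reduced density matrix of $|\Omega\rangle$ on the left-hand side of \eqref{17}, and on the right-hand side the functional integral of \eqref{14} in the same limit. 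The first equality in \eqref{17} is then just this statement together with the observation that, in the limit, the two marked points at $t = \pm i T_E$ in the boundary condition $q(\alpha=\pi, t=\pm iT_E)=\chi$ recede to infinity; since $G$ depends (as emphasized in the proof of Lemma \ref{Lemma 2.3}, and to be shown explicitly in Theorem \ref{Theorem 3.1}) only on canonical data in the vicinity of $B$ on the slice $\alpha=\pi$, the fixing of $\chi$ at these receding points drops out of the path integral, leaving the unconstrained endpoint conditions $q_1(\alpha=0)=q_{1,B}$, $q_1(\alpha=2\pi)=q_{1,A}$.

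Next I would establish the second equality in \eqref{17}, namely that the $\alpha$-path-integral with generator $G$ over the interval $[0,2\pi]$ equals the matrix element $\langle q_{1,A}| e^{-K} | q_{1,B}\rangle$ with $K = 2\pi i G$. This is the standard identification of a phase-space path integral as the kernel of the evolution operator it generates: by the construction recalled in Lemma \ref{Lemma 2.1} and Appendix A of \cite{Averin:2024its}, a functional integral of the form $\int \operatorname{Vol}(s)\, e^{i\int_{0}^{S} ds(\Theta[\dot\Phi] - G[\Phi])}$ with fixed endpoints represents the kernel $\langle q_{1,A}| \mathcal{T} e^{-i\int_0^S G\, ds}|q_{1,B}\rangle$; here $G$ is $\alpha$-independent (it is the fixed generator of the normal rotation at $B$), $S = 2\pi$, so the time-ordering is trivial and the operator is $e^{-2\pi i G}$. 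Defining $K := 2\pi i G$ (equation \eqref{18}) gives $e^{-2\pi i G} = e^{-K}$, which is the claimed form. One should note that $K$ is Hermitian: since $G$ generates a one-parameter unitary (symplectic) flow, $e^{-i\alpha G}$ is unitary for real $\alpha$, so $G$ is self-adjoint; the factor of $i$ in $K = 2\pi i G$ is exactly what is needed so that $e^{-K}$ is a positive operator of trace one, consistent with it being a density matrix — which dovetails with the normalization (choice of the additive constant $c$ in $G$) already fixed in Lemma \ref{Lemma 2.3}.

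The main obstacle I expect is the interchange of the $T_E \to \infty$ limit with the functional integral and the factorization/tracing operations, together with the accompanying boundary-condition argument: one must argue carefully that the contributions of excited states vanish in the limit \emph{after} the partial trace over $q_2$ has been performed, and that the receding endpoint data at $t = \pm iT_E$ genuinely decouple rather than leaving a residual $\chi$-dependence or an ill-defined normalization. This is where the locality of $G$ near $B$ is doing real work, and a fully rigorous treatment would require control of the path-integral measure $\operatorname{Vol}^{(B)}$ in the limit; here, following the style of \cite{Averin:2024its}, I would treat the lattice regularization of Fig. \ref{Fig. 2}--\ref{Fig. 4} as the definition and take the limit at the regularized level. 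A secondary, more technical point is to confirm that the $\alpha$-integration genuinely runs over a full period $2\pi$ (so that the generated diffeomorphism is a complete rotation by $2\pi$ in the normal plane at every point of $B$), since this is what produces the precise coefficient $2\pi i$ in \eqref{18} and, ultimately, the famous factor in the entropy formula to be derived in Theorem \ref{Theorem 3.1}.
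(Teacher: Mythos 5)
Your proposal follows essentially the same route as the paper: specialize Lemma \ref{Lemma 2.3} to states of the form \eqref{1}, take the infinite-Euclidean-time limit to project onto $|\Omega\rangle$, argue that the boundary condition on the $\alpha=\pi$ slice drops out, fix the remaining normalization through the additive ambiguity in $G$, and identify the resulting $\alpha$-path integral with the kernel of $e^{-K}$, $K=2\pi i G$, via Appendix A of \cite{Averin:2024its}. Three points deserve correction or sharpening. First, with the paper's convention $|\Psi\rangle=e^{HT_E}|\chi\rangle$ the ground-state projection requires $T_E\to-\infty$, not $T_E\to\infty$. Second, the paper does not work with a single generic $\chi$: it writes $\alpha|\Omega\rangle=\lim_{T_E\to-\infty}\int\mathcal{D}\chi\, e^{HT_E}|\chi\rangle$, and it is the limit \emph{together with} the integration over $\chi$ that removes the $\alpha=\pi$ boundary condition; moreover, the constant shift in $G$ must then be re-fixed by demanding that the resulting expression \eqref{21} have unit trace --- it is not simply inherited from Lemma \ref{Lemma 2.3}, since the limit and the $\chi$-sum introduce a new overall constant ($\beta$ in the paper). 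Your single-$\chi$ version with the locality of $G$ near $B$ is a defensible physics argument, but you should make the re-fixing of the constant explicit rather than attributing it to Lemma \ref{Lemma 2.3}. Third, your Hermiticity aside is internally inconsistent: if $G$ were self-adjoint, $e^{-2\pi iG}$ would be unitary and could not be a density matrix. In fact the explicit construction in Theorem \ref{Theorem 3.1} (cf.\ \eqref{49}) gives $G$ as $-i$ times a real functional, so that $K=2\pi iG$ is real; positivity and trace one of $e^{-K}$ are guaranteed by the left-hand side of \eqref{17} being a reduced density matrix, not by the argument you sketch. These are repairable slips rather than gaps in the overall strategy, which matches the paper's.
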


\begin{proof}
By expansion of the position-eigenstates $| \chi \rangle$ of \eqref{1} in energy eigenstates, it follows that there is an $\alpha \in \mathbb{C} \setminus \{ 0 \}$ such that 

\begin{equation}
\alpha | \Omega \rangle = \lim_{T_E \to -\infty} \int \mathcal{D} \chi e^{HT_E} | \chi \rangle.
\label{19}
\end{equation}

We apply Lemma \ref{Lemma 2.3} to each state $e^{HT_E} | \chi \rangle.$ For each $\chi,$ we obtain the expression \eqref{14} with the generators $G$ on $(\Gamma^{(B)}, \Theta^{(B)})$ only differing by a constant shift. In all cases, $G$ is the generator of the diffeomorphism in Fig. \ref{Fig. 4} and Fig. \ref{Fig. 5}. We can choose an arbitrary $G$ for the moment and let the normalization constant open. Then, there is a $\beta \in \mathbb{C} \setminus \{ 0 \}$ such that according to \eqref{14}, we have

\begin{equation} \label{20}
\begin{split}
&\int \mathcal{D} q_2 ( \langle q_{1,A} | \langle q_2 | ) |\Omega \rangle \langle \Omega | ( |q_{1,B} \rangle |q_2 \rangle ) \\
=& \beta \cdot \lim_{T_E \to -\infty} \int \mathcal{D} \chi \int_{q_1 (\alpha = 0)=q_{1,B} \atop q(\alpha=\pi, t=\pm iT_E)=\chi}^{q_1(\alpha=2\pi)=q_{1,A}} \operatorname{Vol}^{(B)}(\alpha) e^{i\int_{0}^{2\pi}d \alpha \left( \Theta^{(B)}_{\Phi(\alpha)} \left[\frac{\delta \Phi}{d \alpha} \right] - G[\Phi(\alpha)] \right)}.
\end{split}
\end{equation}

The boundary condition at $\alpha = \pi$ is then omitted due to the limes and summation over $\chi$ so that \eqref{20} reads 

\begin{equation}
\beta \int_{q_1 (\alpha = 0)=q_{1,B}}^{q_1(\alpha=2\pi)=q_{1,A}} \operatorname{Vol}^{(B)}(\alpha) e^{i\int_{0}^{2\pi}d \alpha \left( \Theta^{(B)}_{\Phi(\alpha)} \left[\frac{\delta \Phi}{d \alpha} \right] - G[\Phi(\alpha)] \right)}.
\label{21}
\end{equation}

We can set $\beta = 1$ by fixing the shift ambiguity $G \to G + \text{const.}$ in \eqref{21}. In other words, while $G$ is required to generate the diffeomorphism of Fig. \ref{Fig. 4} and Fig. \ref{Fig. 5}, the shift ambiguity is fixed by requiring the trace of \eqref{21} to equal $1.$ This shows the first equality of \eqref{17}. 

The second equality follows from the discussion in Appendix A of \cite{Averin:2024its}. Note that in the right-hand-side of \eqref{17}, $K$ is the quantum-mechanical operator associated to the functional $K$ on $(\Gamma^{(B)}, \Theta^{(B)})$ in \eqref{18}. This association is discussed in detail in Appendix A of \cite{Averin:2024its}.\footnote{As is usual in quantum mechanics, we use the same symbol for the generator and the associated quantum-mechanical operator. This notation is usually no source of confusion as the meaning should be clear from the context.}   
\end{proof}

The statement of Lemma \ref{Lemma 2.4} is that the reduced density matrix of the ground state in a diffeomorphism-invariant field theory is of the form $e^{-K}$ with the operator $K$ constructed in the proof. Such an operator $K$ is often called a modular Hamiltonian. It naturally gives rise to an entropy bound. 

\begin{lemma} \label{Lemma 2.5}
Let $e^{-K}$ be a density matrix on a Hilbert-space $\mathcal{H}$ for some operator $K.$ For $N$ linearly independent eigenstates of $K$ with eigenvalues not bigger than $\lambda,$ one always has the inequality 

\begin{equation}
\ln (N) \le \lambda.
\label{22}
\end{equation}
\end{lemma}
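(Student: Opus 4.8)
The plan is to exploit the fact that $e^{-K}$ being a density matrix forces a normalization condition on the eigenvalues of $K$, and then bound the count $N$ using only the eigenstates with small eigenvalue. Concretely, let $\{|k_i\rangle\}$ be an orthonormal eigenbasis of $K$ (assuming $K$ self-adjoint so that $e^{-K}$ is a legitimate density matrix) with eigenvalues $k_i$. Since $e^{-K}$ is a density matrix, $\operatorname{Tr}(e^{-K}) = \sum_i e^{-k_i} = 1$. This is the one global constraint I will use.

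Next I would isolate the contribution of the $N$ linearly independent eigenstates whose eigenvalues satisfy $k_i \le \lambda$. Linear independence of $N$ eigenvectors of a self-adjoint operator means they span an $N$-dimensional subspace, and one may choose $N$ orthonormal eigenvectors among them spanning the same eigenvalue data (eigenvalues are preserved on the span); in any case there are at least $N$ indices $i$ with $k_i \le \lambda$. For each such index, $e^{-k_i} \ge e^{-\lambda}$. Summing over just these $N$ terms and using positivity of all the remaining terms in $\sum_i e^{-k_i}$, I get
\begin{equation}
1 = \sum_i e^{-k_i} \;\ge\; \sum_{i : k_i \le \lambda} e^{-k_i} \;\ge\; N e^{-\lambda}.
\end{equation}
Rearranging yields $e^{\lambda} \ge N$, i.e. $\lambda \ge \ln(N)$, which is \eqref{22}.

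The only genuine subtlety — and the step I expect to be the main obstacle — is the passage from ``$N$ linearly independent eigenstates with eigenvalues $\le \lambda$'' to ``at least $N$ terms $e^{-k_i} \ge e^{-\lambda}$ in the trace sum.'' If eigenvalues are repeated, linearly independent eigenvectors within one eigenspace all share the same eigenvalue, so one must make sure the trace is computed in a basis that respects the degeneracies; choosing an orthonormal basis of each eigenspace and noting that $N$ independent eigenvectors occupy at least an $N$-dimensional portion of $\bigoplus_{k_i \le \lambda} \ker(K - k_i)$ handles this cleanly. A secondary technical point, if one wants full rigor in infinite dimensions, is that $e^{-K}$ being trace-class with unit trace already guarantees $K$ has discrete spectrum bounded below with finite-multiplicity eigenvalues, so the sum $\sum_i e^{-k_i}$ is an honest convergent series of positive terms and dropping terms only decreases it — but for the purposes of this paper the finite-dimensional or formal manipulation above suffices.
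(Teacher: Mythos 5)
Your proof is correct and follows essentially the same route as the paper: both reduce the claim to the normalization $\operatorname{tr}(e^{-K})=1$ and the estimate $1 \ge \sum_{i=1}^{N} e^{-\lambda_i} \ge N e^{-\lambda}$ over (an orthonormalized choice of) the $N$ eigenstates with eigenvalue at most $\lambda$. Your extra remarks on degeneracies and trace-class issues are sound but not needed beyond what the paper's one-line argument already uses.
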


\begin{proof}
For $i=1,\ldots,N$ let $| i \rangle$ denote $N$ orthonormal eigenstates of $K$ with $K | i \rangle = \lambda_i | i \rangle.$ Then, one has due to $\lambda_i \le \lambda$

\begin{equation*}
1 = \operatorname{tr} (e^{-K}) \ge \sum_{i=1}^{N} \langle i | e^{-K} | i \rangle = \sum_{i=1}^{N} e^{-\lambda_i} \ge N e^{-\lambda}.
\end{equation*}  
\end{proof} 

Since the left-hand-side of \eqref{22} looks like an entropy, Lemma \ref{Lemma 2.5} provides an entropy bound. And using Lemma \ref{Lemma 2.4} shows the existence of a non-trivial entropy bound associated to a codimension-2 surface $B$ in any diffeomorphism invariant field theory.

Indeed, a similar argument was used in \cite{Casini:2008cr} to obtain a formulation of the Bekenstein bound. There, a Lorentz-invariant field theory is considered. Using our language, the possifold-flow to a Rindler-wedge $\Sigma_1$ leads to a modular Hamiltonian of the vacuum state which is shown to be a suited Lorentz-transformation generator. This then leads to an entropy bound which captures the qualitative Bekenstein bound.\footnote{There is a further point in the argument. It consists of subtracting suited vacuum contributions in order to obtain a finite statement. In our case, this step will be absent.}

Going back to our case of diffeomorphism invariant theories, the bound in Lemma \ref{Lemma 2.5} is still not satisfactory. The bound \eqref{22} refers to the spectrum of the quantum-mechanical operator $K$ which is usually not known. Rather, we would like to have a bound referring to the functional $K$ on phase space. The left-hand-side of \eqref{22} should then involve a suited phase space volume. Although of little practical use, Lemma \ref{Lemma 2.5} motivates how a suited entropy bound should look like and provides the idea of proof. It is formulated in the next lemma. 

\begin{lemma}[\bf{General Entropy Bound}] \label{Lemma 2.6}
Let $(\Gamma, \Theta)$ be a theory\footnote{with or without Hamiltonian} with a density matrix $e^{-K}$ on the associated Hilbert-space $\mathcal{H}_{\Gamma}.$ The operator $K$ let be associated to the generator $K$ on $(\Gamma, \Theta).$ For $\lambda \in \mathbb{R},$ let $V$ be the phase space volume of the inverse image $K^{-1} ((-\infty, \lambda])$ in $\Gamma.$ One has then

\begin{equation}
\ln (V) \le \lambda.
\label{23}
\end{equation} 
\end{lemma}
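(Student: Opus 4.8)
The plan is to mimic the finite-dimensional argument of Lemma~\ref{Lemma 2.5} but replace the counting of orthonormal eigenstates by a phase space volume, using the semiclassical correspondence between quantum states and phase space cells discussed in Appendix~A of \cite{Averin:2024its}. The key idea is that on the set $K^{-1}((-\infty,\lambda])$ the classical functional $K$ is bounded above by $\lambda$, and this set supports a number of (approximately) orthonormal quantum states roughly equal to its phase space volume $V$ (in units where $\hbar=1$); each such state contributes at least $e^{-\lambda}$ to $\operatorname{tr}(e^{-K})=1$, forcing $Ve^{-\lambda}\le 1$.

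Concretely, I would proceed as follows. First, recall that for a theory $(\Gamma,\Theta)$ with Hilbert space $\mathcal{H}_\Gamma$, the quantization furnishes a correspondence under which the phase space volume of a region $R\subseteq\Gamma$ counts the dimension of the subspace of $\mathcal{H}_\Gamma$ ``localized'' in $R$ --- this is the usual Bohr--Sommerfeld/Weyl-law statement, $\dim\mathcal{H}_R \approx \operatorname{Vol}(R)$, made precise in Appendix~A of \cite{Averin:2024its}. Second, apply this to $R = K^{-1}((-\infty,\lambda])$: since the operator $K$ is associated to the generator $K$, states localized in $R$ are, up to semiclassical corrections, spanned by eigenstates of the operator $K$ with eigenvalue $\le\lambda$; hence the number $N$ of such linearly independent eigenstates satisfies $N\approx V$. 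Third, invoke Lemma~\ref{Lemma 2.5} with this $N$ and this $\lambda$ to get $\ln(N)\le\lambda$, and pass to the limit in which the semiclassical correspondence becomes exact, yielding $\ln(V)\le\lambda$.

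Alternatively, and perhaps more cleanly, I would avoid discretizing the spectrum and instead write the trace directly as a phase space integral: using the coherent-state (or Weyl symbol) representation of $\operatorname{tr}(e^{-K})$ from Appendix~A of \cite{Averin:2024its}, one has $1 = \operatorname{tr}(e^{-K}) = \int_\Gamma e^{-K[\Phi]}\,(\text{phase space measure}) + (\text{corrections})$. Restricting the integral to $K^{-1}((-\infty,\lambda])$, where $e^{-K[\Phi]}\ge e^{-\lambda}$, gives $1 \ge \int_{K^{-1}((-\infty,\lambda])} e^{-K}\,\ge\, e^{-\lambda}\,V$, i.e. $V e^{-\lambda}\le 1$, which is \eqref{23}. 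The positivity of $e^{-K}$ as a density matrix guarantees the symbol integral is bounded by the trace in the relevant direction.

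The main obstacle is making the phase space/Hilbert space correspondence precise enough to license either the counting $N\approx V$ or the inequality between the symbol integral and the trace: in infinite dimensions there is no canonical Liouville measure, operator ordering ambiguities affect the symbol of $e^{-K}$, and the ``volume'' must be the one induced by $\Theta$ as set up in \cite{Averin:2024its}. I expect the paper handles this by citing the construction of the measure and the trace formula from Appendix~A of \cite{Averin:2024its} and treating the correspondence as exact in that framework, so that the inequality $\ln(V)\le\lambda$ follows from Lemma~\ref{Lemma 2.5} (or the integral estimate) essentially verbatim; the genuine content has already been packaged into the earlier lemmas and the cited appendix.
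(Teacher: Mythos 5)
Your final step is exactly the paper's: once one knows $1=\operatorname{tr}(e^{-K})=\int_{\Gamma}\operatorname{Vol}\,e^{-K[\Phi]}$, restricting the integral to $K^{-1}((-\infty,\lambda])$ and using $e^{-K}\ge e^{-\lambda}$ there gives $Ve^{-\lambda}\le 1$. But the genuine content of the lemma is precisely the identity $\operatorname{tr}(e^{-K})=\int_{\Gamma}\operatorname{Vol}\,e^{-K[\Phi]}$ with the classical functional $K$ in the exponent, and at that point your proposal has a gap: you invoke it either through a Bohr--Sommerfeld counting $N\approx V$ fed into Lemma~\ref{Lemma 2.5}, or through a coherent-state/Weyl-symbol representation ``plus corrections,'' asserting that positivity of the density matrix fixes the direction of the inequality. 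Neither is a proof. The corrections are exactly what must be controlled (the symbol of $e^{-K}$ is not $e^{-K[\Phi]}$, and for Weyl symbols positivity of the operator does not give a one-sided bound of the required sign), and the approximate counting $N\approx V$ only yields the bound up to uncontrolled semiclassical errors, whereas the lemma claims an exact inequality.

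The paper closes this gap differently and exactly: it uses the framework's path-integral representation of the trace, writing $1=\operatorname{tr}(e^{-K})$ as a functional integral over \emph{closed} paths $\Phi\colon[0,2\pi]\to\Gamma$ weighted by $e^{i\int_0^{2\pi}d\alpha\,(\Theta_{\Phi(\alpha)}[\frac{\delta\Phi}{d\alpha}]-G[\Phi(\alpha)])}$ with $K=2\pi i G$. It then proves the exact identity with $\int_\Gamma\operatorname{Vol}\,e^{-K}$ by (i) the additivity (``commutativity'') of the sum over paths under a decomposition of $\Gamma$ into disjoint open sets plus a null set, equation \eqref{25}, and (ii) refining such a decomposition into cells $U_i$ on which $G$ is nearly constant, so that the difference between the path integral restricted to $U_i$ and $\int_{U_i}\operatorname{Vol}\,e^{-K}$ is bounded by $\varepsilon$-terms controlled by $\sup_{\Phi\in U_i}|G[\Phi]-G[\Phi_i]|$; letting the cells shrink gives the equality on every finite-volume open set and hence on $\Gamma$. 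If you want to salvage your second route without the path-integral argument, you would need something like a Berezin--Lieb inequality for the lower symbol together with control of the difference between $\langle\Phi|K|\Phi\rangle$ and $K[\Phi]$; as written, ``positivity guarantees the symbol integral is bounded by the trace in the relevant direction'' is not a valid step.
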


\begin{proof}
Similarly to \eqref{17} and \eqref{18}, we introduce the generator $G$ by $K= 2 \pi i G.$ The requirement $e^{-K}$ being a density matrix means 

\begin{equation}
1= \int \operatorname{Vol}(\alpha) e^{i\int_{0}^{2\pi}d \alpha \left( \Theta_{\Phi(\alpha)} \left[\frac{\delta \Phi}{d \alpha} \right] - G[\Phi(\alpha)] \right)}.
\label{24}
\end{equation}

In \eqref{24}, the sum is taken over closed paths $\Phi \colon [0,2\pi] \longrightarrow \Gamma$ in $\Gamma,$ i.e. $\Phi(0)=\Phi(2\pi).$ 

Consider the set $M = X \dot{\cup} Y \dot{\cup} Z$ which is a disjoint union of two open sets $X$ and $Y$ and a possible null set $Z$ in $\Gamma.$ We then have

\begin{equation} \label{25}
\begin{split}
&\int_{\Phi \in M} \operatorname{Vol}(\alpha) e^{i\int_{0}^{2\pi}d \alpha \left( \Theta_{\Phi(\alpha)} \left[\frac{\delta \Phi}{d \alpha} \right] - G[\Phi(\alpha)] \right)} \\
=& \int_{\Phi \in X} \operatorname{Vol}(\alpha) e^{i\int_{0}^{2\pi}d \alpha \left( \Theta_{\Phi(\alpha)} \left[\frac{\delta \Phi}{d \alpha} \right] - G[\Phi(\alpha)] \right)} \\
+& \int_{\Phi \in Y} \operatorname{Vol}(\alpha) e^{i\int_{0}^{2\pi}d \alpha \left( \Theta_{\Phi(\alpha)} \left[\frac{\delta \Phi}{d \alpha} \right] - G[\Phi(\alpha)] \right)}.
\end{split}
\end{equation}

To justify this, we start from the left-hand-side of \eqref{25}. On each time-slice $\alpha,$ we can remove $Z$ from the phase space integral. This does not affect the result since $Z$ is a null set. The paths on the right-hand-side missed compared to the left-hand-side are those passing $Z$ at some time. But those are the ones we just eliminated without changing the result. Hence, we have \eqref{25}. 

Let $N \subseteq \Gamma$ be an open set with finite volume. We can find open sets $U_i \subseteq \Gamma$ such that $N$ is the disjoint union

\begin{equation}
N = \dot{\bigcup_i} U_i \dot{\cup} Z
\label{26}
\end{equation}

with a null set $Z.$ In a moment, we will specify how to choose the sets $U_i.$ In any case, for each i, we choose $\Phi_i \in U_i.$ 

By \eqref{25}, we have

\begin{equation} \label{27}
\begin{split}
&\left| \int_{\Phi \in N} \operatorname{Vol}(\alpha) e^{i\int_{0}^{2\pi}d \alpha \left( \Theta_{\Phi(\alpha)} \left[\frac{\delta \Phi}{d \alpha} \right] - G[\Phi(\alpha)] \right)} \right. \\
& \left. - \int_{N} \operatorname{Vol} e^{-K} \right| \\
=& \left| \sum_i \left( \int_{\Phi \in U_i} \operatorname{Vol}(\alpha) e^{i\int_{0}^{2\pi}d \alpha \left( \Theta_{\Phi(\alpha)} \left[\frac{\delta \Phi}{d \alpha} \right] - G[\Phi(\alpha)] \right)} \right. \right. \\
& \left. \left. - \int_{U_i} \operatorname{Vol} e^{-K} \right) \right| \\
=& \left| \sum_i \left( \int_{\Phi \in U_i} \operatorname{Vol}(\alpha) e^{i\int_{0}^{2\pi}d \alpha \left( \Theta_{\Phi(\alpha)} \left[\frac{\delta \Phi}{d \alpha} \right] - G[\Phi(\alpha)] \right)} \right. \right. \\
&-\int_{U_i} \operatorname{Vol} e^{-K[\Phi_i]} \\
&+ \left. \left. \int_{U_i} \operatorname{Vol} e^{-K[\Phi_i]} - \int_{U_i} \operatorname{Vol} e^{-K[\Phi]} \right) \right|
\end{split}
\end{equation}

\begin{equation*}
\begin{split}
\le & \sum_i \left| \int_{\Phi \in U_i} \operatorname{Vol}(\alpha) e^{i\int_{0}^{2\pi}d \alpha \left( \Theta_{\Phi(\alpha)} \left[\frac{\delta \Phi}{d \alpha} \right] - G[\Phi(\alpha)] \right)} \right. \\
& \left. -\int_{U_i} \operatorname{Vol} e^{-K[\Phi_i]} \right| \\
&+ \sum_i \left| \int_{U_i} \operatorname{Vol} ( e^{-K[\Phi_i]} - e^{-K[\Phi]}) \right| \\
=& \sum_i \left| \int_{\Phi \in U_i} \operatorname{Vol}(\alpha) e^{i\int_{0}^{2\pi}d \alpha \Theta_{\Phi(\alpha)} \left[\frac{\delta \Phi}{d \alpha} \right]} \left( e^{-i \int_{0}^{2 \pi} d \alpha G[\Phi(\alpha)]} - e^{-K[\Phi_i]} \right) \right| \\
&+ \sum_i \left| \int_{U_i} \operatorname{Vol} (e^{-K[\Phi_i]} - e^{-K[\Phi]}) \right| \\
\le & \sum_i \int_{\Phi \in U_i} \operatorname{Vol}(\alpha) \left| e^{-i \int_{0}^{2 \pi} d \alpha G[\Phi(\alpha)]} - e^{-K[\Phi_i]} \right| \\
&+ \sum_i \int_{U_i} \operatorname{Vol} \left| e^{-K[\Phi_i]} - e^{-K[\Phi]} \right|. 
\end{split}
\end{equation*} 

We want to estimate the integrands in the last line of \eqref{27}. We have for any path $\Phi(\alpha)$

\begin{equation*}
\begin{split}
\left| i \int_{0}^{2 \pi} d \alpha G[\Phi(\alpha)] - K[\Phi_i] \right| &= \left| i \int_{0}^{2 \pi} d \alpha G[\Phi(\alpha)] - i \int_{0}^{2 \pi} d \alpha G[\Phi_i] \right| \\
&\le  2 \pi \sup_{\Phi \in U_i} | G[\Phi] - G[\Phi_i] |.
\end{split}
\end{equation*}

Similarly, we have for each $\Phi \in U_i$ 

\begin{equation*}
\begin{split}
| K[\Phi_i] - K[\Phi] | &= 2 \pi | G[\Phi] - G[\Phi_i] | \\
&\le  2 \pi \sup_{\Phi \in U_i} | G[\Phi] - G[\Phi_i] |.
\end{split}
\end{equation*}

From the last two equations, we learn that by choosing the $U_i$ in the decomposition \eqref{26} sufficiently narrow around $\Phi_i,$ we can require the integrands in the last line of \eqref{27} to be arbitrarily small. Let be $\varepsilon > 0.$ We then find a decomposition \eqref{26} with $U_i$ having volumes smaller than $1,$ such that 

\begin{equation} \label{28}
\begin{split}
& \left| \int_{\Phi \in N} \operatorname{Vol}(\alpha) e^{i\int_{0}^{2\pi}d \alpha \left( \Theta_{\Phi(\alpha)} \left[\frac{\delta \Phi}{d \alpha} \right] - G[\Phi(\alpha)] \right)} - \int_{N} \operatorname{Vol} e^{-K} \right| \\
\le & \sum_i \int_{\Phi \in U_i} \operatorname{Vol}(\alpha) \varepsilon + \sum_i \int_{U_i} \operatorname{Vol} \varepsilon \\
\le & \sum_i \int_{U_i} \operatorname{Vol} \cdot 2 \varepsilon = 2 \varepsilon \int_{N} \operatorname{Vol}.
\end{split}
\end{equation}

We can take $\varepsilon \to 0$ and conclude

\begin{equation*}
\int_{\Phi \in N} \operatorname{Vol}(\alpha) e^{i\int_{0}^{2\pi}d \alpha \left( \Theta_{\Phi(\alpha)} \left[\frac{\delta \Phi}{d \alpha} \right] - G[\Phi(\alpha)] \right)} = \int_{N} \operatorname{Vol} e^{-K}
\end{equation*}

for all open $N \subseteq \Gamma$ with finite volume. Using again \eqref{25}, we see that this holds also for $\Gamma.$ Therefore, we have

\begin{equation}
\int \operatorname{Vol}(\alpha) e^{i\int_{0}^{2\pi}d \alpha \left( \Theta_{\Phi(\alpha)} \left[\frac{\delta \Phi}{d \alpha} \right] - G[\Phi(\alpha)] \right)} = \int \operatorname{Vol} e^{-K}
\label{29}
\end{equation}

and according to \eqref{24}

\begin{equation}
\int_{\Gamma} \operatorname{Vol} e^{-K} = 1.
\label{30}
\end{equation}

We conclude

\begin{equation} \label{31}
\begin{split}
1 = \int_{\Gamma} \operatorname{Vol} e^{-K} \ge \int_{K^{-1}((- \infty, \lambda])} \operatorname{Vol} e^{-K}
\ge \int_{K^{-1}((- \infty, \lambda])} \operatorname{Vol} e^{-\lambda} = V e^{-\lambda}
\end{split}
\end{equation}

which shows \eqref{23}. 
\end{proof}

In order to make a quantitative statement of the bound predicted by Lemma \ref{Lemma 2.6} for diffeomorphism invariant theories, we need to construct the generator $K$ of Lemma \ref{Lemma 2.4}. As mentioned, its action as a generator is fixed by Fig. \ref{Fig. 4} and Fig. \ref{Fig. 5}. The remaining ambiguity $K \to K + \text{const.}$ is determined by the normalization of the associated density matrix. We start with the latter point. The next lemma fixes the mentioned ambiguity uniquely. Note that we have restored $\hbar$ for the moment. 

\begin{lemma} \label{Lemma 2.7}
Let $(\Gamma, \Theta)$ be a theory\footnote{with or without Hamiltonian} with a density matrix $e^{-\frac{K}{\hbar}}$ on the associated Hilbert-space $\mathcal{H}_{\Gamma}.$ The operator $K$ let be associated to the generator $K$ on $(\Gamma, \Theta).$ One has then $K \ge 0$ and $0$ is in the image of $K.$
\end{lemma}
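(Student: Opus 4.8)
The plan is to extract the claim from the structure already established in Lemma \ref{Lemma 2.6} together with the explicit construction of $K$ as a generator. First I would recall that, by the setup preceding Lemma \ref{Lemma 2.7} (and the construction underlying Lemma \ref{Lemma 2.4}), the operator $K$ arises as $K = 2\pi i \hbar\, G$ where $G$ generates the rotation-by-$\alpha$ diffeomorphism of Fig.~\ref{Fig. 4}, so that $e^{-K/\hbar}$ is represented by the closed-path functional integral of \eqref{24}. The normalization-of-the-density-matrix condition is $\operatorname{tr}(e^{-K/\hbar}) = 1$, which by the argument of Lemma \ref{Lemma 2.6} (equations \eqref{29}--\eqref{30}) equals $\int_\Gamma \operatorname{Vol}\, e^{-K/\hbar}$. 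The key point to exploit is the \emph{monotonicity} proven inside Lemma \ref{Lemma 2.6}: for every $\lambda$, the phase-space volume $V(\lambda)$ of $K^{-1}((-\infty,\lambda])$ satisfies $\ln V(\lambda) \le \lambda/\hbar$, equivalently $V(\lambda) \le e^{\lambda/\hbar}$.

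Next I would argue $K \ge 0$ by contradiction. Suppose $K$ takes a negative value on a set of positive phase-space measure; then by continuity of the generator functional $K$ there is some $\lambda_0 < 0$ with $V(\lambda_0) > 0$. Since $V$ is nondecreasing in $\lambda$ and bounded below by $V(\lambda_0) > 0$ for all $\lambda \in [\lambda_0, 0)$, taking $\lambda \uparrow 0^-$ in the bound $V(\lambda) \le e^{\lambda/\hbar}$ would force $0 < V(\lambda_0) \le V(\lambda) \le e^{\lambda/\hbar} < 1$ for each such $\lambda$; letting instead $\lambda \to -\infty$ along any sequence on which $V$ stays bounded away from zero would give $0 < V(\lambda_0) \le e^{\lambda/\hbar} \to 0$, a contradiction. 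Hence $K \ge 0$ almost everywhere on $\Gamma$, and again by continuity of $K$ as a functional on $(\Gamma,\Theta)$ this is genuinely everywhere.

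For the second assertion, that $0$ lies in the image of $K$, I would use the normalization \eqref{30}, $\int_\Gamma \operatorname{Vol}\, e^{-K/\hbar} = 1$. If $K$ were bounded below by some $\delta > 0$, the integrand would be bounded above by $e^{-\delta/\hbar} < 1$ pointwise; combined with $K \ge 0$ this would only be consistent with total mass $1$ if $\int_\Gamma \operatorname{Vol} = \infty$ in a way that still sums to $1$ against a strictly-smaller-than-one weight — more precisely, one shows $\inf K = 0$ by noting that $V(\lambda) = 0$ for all $\lambda < \delta$ would make $\int_\Gamma \operatorname{Vol}\, e^{-K/\hbar} = \int_{\lambda \ge \delta} \operatorname{Vol}\, e^{-K/\hbar}$, and then the layer-cake / distributional form of the integral together with $V(\lambda) \le e^{\lambda/\hbar}$ caps the total at $\int_\delta^\infty e^{-\lambda/\hbar}\, dV(\lambda) < 1$, contradicting \eqref{30}. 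Thus $\inf_\Gamma K = 0$; since $K$ is a (continuous) generator on phase space and the sublevel sets $K^{-1}((-\infty,\lambda])$ have finite volume that shrinks to a nonempty limit as $\lambda \downarrow 0$, the infimum is attained and $0$ is in the image.

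\textbf{Main obstacle.} The delicate step is the last one: promoting $\inf K = 0$ to ``$0$ is attained.'' This requires some regularity/properness of $K$ on $\Gamma$ — e.g. that the sublevel sets are nonempty with volumes decreasing continuously to a positive-or-attained limit — which is not completely spelled out in the functional-integral framework; I expect the author to invoke continuity of $K$ together with the fact (implicit in Lemma \ref{Lemma 2.6}'s layer-cake estimate) that $V(\lambda) \to 0$ as $\lambda \to 0^+$ cannot happen if $\int \operatorname{Vol}\,e^{-K/\hbar} = 1$, so the minimizing states genuinely exist. Making the convergence $V(\lambda) \downarrow$ rigorous and ruling out the infimum being approached only ``at infinity'' in $\Gamma$ is where care is needed; everything else follows mechanically from the already-proven bound $\ln V(\lambda) \le \lambda/\hbar$ and the normalization $\int_\Gamma \operatorname{Vol}\, e^{-K/\hbar} = 1$.
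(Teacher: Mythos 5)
Your proposal does not follow the paper's route, and unfortunately the route you chose does not close. Both of your key steps have genuine gaps. For $K\ge 0$: from $V(\lambda_0)>0$ with $\lambda_0<0$ you never actually reach a contradiction. The chain $0<V(\lambda_0)\le V(\lambda)\le e^{\lambda/\hbar}<1$ for $\lambda\in[\lambda_0,0)$ only tells you $V(\lambda_0)<1$, which is harmless; and for $\lambda\to-\infty$ monotonicity goes the wrong way ($V(\lambda)\le V(\lambda_0)$ there), so nothing forces $V$ to ``stay bounded away from zero'' along that limit. Indeed, at fixed $\hbar$ the bound of Lemma \ref{Lemma 2.6} explicitly \emph{permits} $V(\lambda)$ up to $e^{\lambda/\hbar}>0$ for negative $\lambda$, so $K$ slightly negative on a small-volume set is consistent with everything you use. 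For the attainment step the layer-cake cap is also false: with $V(\lambda)$ allowed to grow like $e^{\lambda/\hbar}$, the integral $\int_\Gamma \operatorname{Vol}\, e^{-K/\hbar}$ can equal $1$ even if $K\ge\delta>0$ everywhere — e.g.\ let $K$ take the values $\lambda_n=\delta+n$ on sets of volume $v_n=\tfrac12 e^{\lambda_n/\hbar}2^{-n}$; then $\sum_n v_n e^{-\lambda_n/\hbar}=1$ while $V(\lambda)\le e^{\lambda/\hbar}$ holds throughout. So the fixed-$\hbar$ data (normalization \eqref{30} plus the bound of Lemma \ref{Lemma 2.6}) simply cannot yield either assertion of Lemma \ref{Lemma 2.7}; the conclusion is not a consequence of that information.

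The missing idea is the one the paper uses: treat the classical functional $K$ as $\hbar$-independent and exploit the normalization \eqref{32}, $1=\int_\Gamma \operatorname{Vol}\, e^{-K[\Phi]/\hbar}$, in the limit $\hbar\to 0$ (a Laplace-type argument). If $K$ were negative somewhere, the integral could be made arbitrarily large as $\hbar\to 0$; if $K$ were strictly positive everywhere, the integral would tend to $0$; both contradict the normalization, whence $K\ge 0$ and some $\Phi$ has $K[\Phi]=0$. Your closing worry about promoting $\inf K=0$ to attainment is a fair criticism of the regularity left implicit in the paper's own two-line argument, but within your fixed-$\hbar$ framework the problem arises one step earlier: even $\inf K=0$ is not derivable without varying $\hbar$.
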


\begin{proof}
From \eqref{30}, we have

\begin{equation}
1=\operatorname{tr}_{\mathcal{H}_{\Gamma}} (e^{-\frac{K}{\hbar}})=\int_{\Gamma} \operatorname{Vol} e^{-\frac{1}{\hbar} K[\Phi]}.
\label{32}
\end{equation}

We now consider the limit $\hbar \to 0.$ From \eqref{32}, we conclude $K \ge 0$ on $\Gamma.$ Otherwise, the integral in the last line could be made arbitrarily large contradicting \eqref{32}.

On the other hand, if $K>0,$ the last line in \eqref{32} would vanish for $\hbar \to 0$ again contradicting \eqref{32}. Hence, there is a $\Phi \in \Gamma$ with $K[\Phi]=0.$
\end{proof}

As explained above, it remains to implement the diffeomorphism in Fig. \ref{Fig. 4} and Fig. \ref{Fig. 5} as a generator. This requires using the precise symplectic structure of diffeomorphism invariant field theories. Precisely this, we will do in the next chapter. 

\section{Computation of the Bound}
\label{Kapitel 3}

In the last chapter, we have seen that the gravitational entropy bound is fully determined by a certain generator on the theory's phase space. Our remaining task is therefore to give a precise form of this generator. This requires going into the details of the symplectic structure of diffeomorphism invariant field theories. Such an analysis was done in \cite{Iyer:1994ys} and we will use several results stated there. Especially, we will use the same notations and conventions (in addition to our declarations) and refer to \cite{Iyer:1994ys} for further details.

Remember that we have required $(\Gamma, \Theta, H)$ to be a diffeomorphism invariant field theory with a Lagrange-form $L$ on the spacetime $M.$ The requirement of diffeomorphism invariance and its implications were studied in detail in \cite{Iyer:1994ys}. In particular, according to Lemma 2.1 in chapter 2 of \cite{Iyer:1994ys}, we can take $L$ without loss of generality to be of the form

\begin{equation}
L=L(g_{ab}, R_{cdef}, \nabla_{a_1} R_{cdef}, \ldots, \nabla_{(a_1 \cdots} \nabla_{a_m)} R_{cdef}, \psi, \nabla_{a_1} \psi, \ldots, \nabla_{(a_1 \cdots} \nabla_{a_l)} \psi).
\label{33}
\end{equation}         

That is, the dynamical fields appearing in $L$ are a metric $g_{ab}$ as well as additional fields $\psi$ (which we collectively denote as $\Phi$). We use latin letters for the spacetime indexes. $\nabla$ is the Levi-Civita connection and $R_{abcd}$ the curvature of $g_{ab}.$ 

We are now ready to state and prove the gravitational entropy bound. This is our main result.    

\begin{theorem}[\bf{Gravitational Entropy Bound}] \label{Theorem 3.1}
Let $(\Gamma, \Theta, H)$ be a diffeomorphism invariant field theory on a $n$-dimensional spacetime $M$ foliated by time-slices given by a hypersurface $\Sigma.$ The Lagrange-form is taken to be of the form \eqref{33}. Consider the possifold-flow $\partial \Sigma \to B$ for a codimension-$2$ surface $B.$ On $(\Gamma^{(B)}, \Theta^{(B)}),$ we define the functional 

\begin{equation}
K[\Phi] = 2\pi \oint_B X^{cd} \varepsilon_{cd} + c
\label{34}
\end{equation}

for a state $\Phi \in \Gamma^{(B)}.$ The $(n-2)$-form $X^{cd}$ is given by

\begin{equation}
(X^{cd})_{c_3 \cdots c_n} = -E_R^{abcd} \varepsilon_{abc_3 \cdots c_n}
\label{35}
\end{equation}

where $E_R^{abcd}$ is defined as

\begin{equation}
\delta_R L = \varepsilon E_R^{abcd} \delta R_{abcd}
\label{36}
\end{equation}

and $\delta_R$ is understood as a variation of the curvature tensor viewed as an independent field with $E_R^{abcd}$ required to have the same algebraic tensor symmetries as $R_{abcd}.$ $\varepsilon$ denotes the volume $n$-form. 

$\varepsilon_{cd}$ denotes the binormal to $B.$ That is, $\varepsilon_{cd}$ is the natural volume element on the tangent space perpendicular to $B$ with the orientation given by $\varepsilon_{cd} T^c X^d > 0.$ $T^a$ is hereby a future-directed timelike vector determining the orientation of $\Sigma.$ $X^a$ is a spacelike vector pointing outwards to $B$ and determines the orientation of $B.$

The constant $c \in \mathbb{R}$ is determined by the requirement $K \ge 0$ and $0$ to be in the image of K.

For $\frac{\lambda}{\hbar} \in \mathbb{R},$ let $V$ be the phase space volume of the inverse image $K^{-1} ((-\infty, \lambda])$ in $\Gamma^{(B)}.$ One has then

\begin{equation}
\ln (V) \le \frac{\lambda}{\hbar}.
\label{37}
\end{equation} 
\end{theorem}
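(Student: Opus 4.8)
The plan is to reduce the theorem to the already-established machinery: Lemma \ref{Lemma 2.4} tells us that the reduced density matrix of the ground state on $(\Gamma^{(B)}, \Theta^{(B)})$ is $e^{-K}$ with $K = 2\pi i G$, where $G$ generates the diffeomorphism sketched in Fig.~\ref{Fig. 4} and Fig.~\ref{Fig. 5} (a rotation by angle $d\alpha$ in the plane normal to $B$), and Lemma \ref{Lemma 2.6} (together with Lemma \ref{Lemma 2.7} for fixing the additive constant) converts any such $e^{-K}$ into the phase-space entropy bound $\ln(V) \le \lambda/\hbar$. So the entire content left to prove is: (i) the generator $G$ exists, i.e. there really is a phase-space vector field implementing the rotation-around-$B$ diffeomorphism, and (ii) the associated Hamiltonian functional $K = 2\pi i G$ is given explicitly by the surface integral \eqref{34}--\eqref{36}. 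Once (i) and (ii) are in hand, \eqref{37} is immediate from Lemma \ref{Lemma 2.6}.

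First I would set up the Noether/Iyer--Wald formalism for the Lagrange-form \eqref{33}: for a vector field $\xi^a$ on $M$ generating a diffeomorphism, one has the presymplectic potential $\theta(\Phi, \delta\Phi)$, the Noether current $j_\xi = \theta(\Phi, \mathcal{L}_\xi \Phi) - \xi \cdot L$, and — because $L$ is diffeomorphism-invariant — the Noether charge $(n-2)$-form $Q_\xi$ with $j_\xi = dQ_\xi$ on shell. The Hamiltonian generating the flow of $\xi$ along a hypersurface is then, via the fundamental variational identity $\delta H_\xi = \int_{\partial} (\delta Q_\xi - \xi \cdot \theta)$, expressible as a pure boundary term whenever $\xi$ vanishes appropriately or is tangent to the relevant surfaces. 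The key geometric input is that the diffeomorphism of Fig.~\ref{Fig. 5} is exactly the ``boost/rotation about $B$'': its generating vector field $\xi$ vanishes on $B$ and has unit surface gravity there (the normal-plane rotation by $d\alpha$ is precisely a Euclidean boost whose orbits close with period $2\pi$), so $\xi|_B = 0$ and $(\nabla_a \xi_b)|_B = \varepsilon_{ab}$, the binormal. The bulk contributions to $H_\xi$ then drop (interior deformations are gauge, as argued below Fig.~\ref{Fig. 5}), leaving $K = 2\pi i G \propto \oint_B Q_\xi$ evaluated with $\xi|_B=0$, $\nabla_{[a}\xi_{b]}|_B = \varepsilon_{ab}$.

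Next I would compute $Q_\xi$ for \eqref{33} restricted to this $\xi$. By the Iyer--Wald analysis the Noether charge has the schematic form $Q_\xi = W_c \xi^c + E_R^{abcd}\nabla_{[c}\xi_{d]}\, \varepsilon_{ab\,\cdots} + (\text{terms} \propto \nabla^{(k)}\xi \text{ for } k\ge 2)$, and on $B$ only the $E_R^{abcd}\nabla_{[c}\xi_{d]}$ piece survives because $\xi|_B=0$ kills the first term and the higher-derivative terms can be arranged (or gauge-fixed, using that interior changes are redundancies) to vanish. Substituting $\nabla_{[c}\xi_{d]}|_B = \varepsilon_{cd}$ and using the definition \eqref{35}--\eqref{36} of $X^{cd}$ as the dual of $-E_R^{abcd}$ yields $\oint_B Q_\xi = \oint_B X^{cd}\varepsilon_{cd}$, hence $K = 2\pi\oint_B X^{cd}\varepsilon_{cd} + c$, which is \eqref{34}. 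The constant $c$ is then pinned down by invoking Lemma \ref{Lemma 2.7}: $K\ge 0$ with $0$ in its image. With $K$ identified and confirmed to be (the operator associated to) a bona fide density-matrix generator by Lemma \ref{Lemma 2.4}, Lemma \ref{Lemma 2.6} applied on $(\Gamma^{(B)},\Theta^{(B)})$ gives $\ln(V)\le \lambda/\hbar$ directly.

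\textbf{Main obstacle.} The hard part is step (i)/(ii) done \emph{rigorously}: establishing that the rotation-about-$B$ diffeomorphism genuinely lifts to a well-defined symplectic generator $G$ on the possifold phase space $(\Gamma^{(B)},\Theta^{(B)})$ — in particular that the would-be bulk contributions to $H_\xi$ are truly gauge (so $G$ depends only on data near $B$, as claimed in the proof of Lemma \ref{Lemma 2.3}) — and that for the higher-curvature Lagrangian \eqref{33} the only boundary contribution surviving the conditions $\xi|_B=0$, $\nabla_{[a}\xi_{b]}|_B=\varepsilon_{ab}$ is the $E_R^{abcd}$ term, i.e. that the derivative-of-Riemann terms $\nabla^{(k)}R$ in \eqref{33} do not spoil the clean formula \eqref{35}. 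This is exactly the computation that underlies the Iyer--Wald entropy formula, and the subtlety is handling the $\nabla^{(k)}\xi$ terms in $Q_\xi$ at a surface where $\xi$ degenerates; I would lean on the identities of \cite{Iyer:1994ys} (their explicit form of $Q_\xi$ and the ambiguity analysis) to control these, and argue that the remaining ambiguities are precisely the shift $K\to K+\text{const.}$ fixed by Lemma \ref{Lemma 2.7}.
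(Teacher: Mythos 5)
Your proposal is correct and follows essentially the same route as the paper's proof: reduce everything to Lemmas \ref{Lemma 2.4}, \ref{Lemma 2.6}, \ref{Lemma 2.7}, identify the boost/rotation vector field $\xi$ with $\xi|_B=0$ and $\nabla_{[a}\xi_{b]}|_B \propto \varepsilon_{ab}$, and use the Iyer--Wald variational identity $\delta G = \oint_B(\delta Q_\xi - \xi\cdot\theta)$ together with the decomposition of $Q_\xi$ (their Proposition 4.1, in which the $\nabla^{(k)}\xi$ and $\mathcal{L}_\xi\Phi$ terms you worry about are exactly the $Y$ and $dZ$ pieces that drop on $B$) to land on \eqref{34}. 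The paper does the same, only adding the explicit local gauge-fixed coordinates \eqref{38}--\eqref{41} and the factor $-i$ from the Euclidean continuation, so your plan matches its proof in all essentials.
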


\begin{proof}
We have to show that the diffeomorphism of Fig. \ref{Fig. 4} and Fig. \ref{Fig. 5} used in the constructive proof of Lemma \ref{Lemma 2.3} leads in Lemma \ref{Lemma 2.4} to the generator $K$ of the form \eqref{34}. The assertion then follows from Lemma \ref{Lemma 2.6} and \ref{Lemma 2.7}. 

For any point in $B,$ we can find a coordinate neighborhood with coordinates $(x^A, y^i)$ where $\frac{\partial}{\partial y^i}$ are required to be tangential to $B$ at $B$ while $\frac{\partial}{\partial x^A}$ are not. We use capital letters $A,B,\ldots = 1, 2$ and lowercase letters $i,j,\ldots = 3, \ldots, n$ in the middle of the alphabet to label the corresponding indexes. 

By imposing gauge-fixing, we can require for any state in $\Gamma$ the metric near $B$ to be of the form

\begin{equation} \label{38}
\begin{split}
ds^2 &= g_{ab} dx^a dx^b \\
&= \eta_{AB} dx^A dx^B + g_{ij} dy^i dy^j + \ldots
\end{split}
\end{equation}

with $\eta$ meaning the flat metric. $g_{ij}$ is state-dependent and the dots are vanishing on $B.$ 

We choose coordinates $x^A = (t,x)$ such that in Lorentz-signature 

\begin{equation}
\eta_{AB} dx^A dx^B = -dt^2 + dx^2.
\label{39}
\end{equation}

We require the coordinate system $(t,x,y^i)$ to be positively oriented. That is, $\partial_x$ is pointing outward on $B$ and $\partial_t$ is future-directed towards the evolution generated by $H.$

As in Fig. \ref{Fig. 4} and Fig. \ref{Fig. 5}, we consider the state in Euclidean signature by analytic continuation

\begin{equation}
t=it_E.
\label{40}
\end{equation}

We can then easily describe the diffeomorphism in Fig. \ref{Fig. 5} using the polar coordinates

\begin{equation} \label{41}
\begin{split}
t_E &= r \sin (\varphi) \\
x &= r \cos (\varphi)
\end{split}
\end{equation}

where it takes the form in accordance with orientation

\begin{equation}
-d \alpha \frac{\partial}{\partial \varphi} =: d \alpha \xi
\label{42}
\end{equation}

and we have defined the vectorfield $\xi = -\partial_\varphi.$

Before we can proceed to calculate the generator of \eqref{42}, we need to justify that \eqref{42} is well-defined. 

There are several points to clarify. First, \eqref{42} is to act on $\Gamma^{(B)}.$ Its elements can be thought as being parametrized by solutions of the equations of motion. The field theory $(\Gamma, \Theta, H)$ is diffeomorphism invariant and may possess additional gauge symmetries. After imposing gauge-fixing and gauge constraints, there remain free initial conditions on $\Sigma$ forming the parametrization of $\Gamma$ (and the associated $\Gamma^{(B)}$ via the possifold-flow $\partial \Sigma \to B$). However, in imposing this requirements, there may still be free degrees of freedom playing the role of boundary values. The information about their values can be thought as being contained in a quantity $T$ at $B$ where $T$ is formed out of the dynamical fields $\Phi$ and their derivatives. The information whether these degrees of freedom are fixed as boundary conditions or appear as free coordinates to be summed over in the functional integral is contained in $\Theta.$ As discussed in detail in chapter 2 of \cite{Averin:2024its}, the presymplectic potential $\theta$ determines the canonical coordinates to be included in a possifold-flow $\partial \Sigma \to B$ of a field theory. 

Whether these degrees of freedom are fixed or appear as canonical variables to be summed over in the functional integral, in each case \eqref{42} is well-defined. Since $T$ has a well-defined value at $B$ for $r=0$ in the coordinates \eqref{41}, we have $\mathcal{L}_\xi T = 0.$ Hence, $T$ is unchanged by \eqref{42}.\footnote{We will see in a moment that by the same argument, the entropy bound takes in each case the same form. This is a hint, that the mentioned degrees of freedom are needed for saturation of the gravitational entropy bound. Hence, this suggests, they are the relevant degrees of freedom responsible for black hole entropy and microstates. This is precisely the proposal made in \cite{Averin:2018owq,Averin:2019zsi} which led to the possifold notion introduced in \cite{Averin:2024its}.}  

The second point is to justify that \eqref{42} is globally well-defined over $B.$ In \eqref{42}, we have used a local coordinate neighborhood of a given point around $B.$ Consider an observer in a point at $B$ in the intersection of two such coordinate neighborhoods. For the observer, \eqref{42} describes in both coordinate systems a rotation in the locally flat coordinates perpendicular to $B.$ Since the observer's locally flat coordinate systems are related by a rotation, the rotation angle is equal in both coordinate systems. Hence, the diffeomorphisms constructed in both neighborhoods agree for the observer locally. Therefore, \eqref{42} uniquely defines a diffeomorphism near $B.$ As we will see in a moment, its form in the interior of $B$ does not affect the associated generator. 

According to Lemma \ref{Lemma 2.4}, the functional $K$ to be constructed is given by \eqref{18}. Hereby, $G$ is according to the constructive proof of Lemma \ref{Lemma 2.3} the generator of the diffeomorphism $\xi.$ By equation \eqref{31} (note also \eqref{7} for the correct sign) of \cite{Averin:2024its}, this means the requirement

\begin{equation}
\delta G = \Omega^{(B)} [\delta \Phi, \mathcal{L}_\xi \Phi].
\label{43}
\end{equation}

Equation (75) in \cite{Iyer:1994ys} gives for the presymplectic current of any diffeomorphism invariant field theory

\begin{equation}
\omega_\Phi [\delta \Phi, \mathcal{L}_\xi \Phi] = d( \delta Q_\xi - \xi \cdot \theta)
\label{44}
\end{equation}

with the Noether-charge $(n-2)$-form $Q_\xi$ associated to an arbitrary spacetime vectorfield $\xi.$ The latter is according to Proposition 4.1 in chapter 4 of \cite{Iyer:1994ys} of the form

\begin{equation}
Q_\xi [\Phi] = W_c(\Phi) \xi^c + X^{cd}(\Phi) \nabla_{[c} \xi_{d]} + Y(\Phi, \mathcal{L}_\xi \Phi) + dZ(\Phi,\xi).
\label{45}
\end{equation}

The coefficients $W,X,Y,Z$ are functions of the shown fields and their derivatives. $Y$ is linear in $\mathcal{L}_\xi \Phi$ and $Z$ is linear in $\xi.$ The decomposition \eqref{45} is not unique but it can always be chosen such that $X$ is given by \eqref{35}. 

We can then evaluate \eqref{43} using \eqref{44} as

\begin{equation} \label{46}
\begin{split}
\delta G &= \oint_{B} (\delta Q_\xi - \xi \cdot \theta) \\
&= \delta \left( \oint_B Q_\xi \right) \\
&= \delta \left( \oint_B X^{cd}(\Phi) \nabla_{[c} \xi_{d]} \right)
\end{split}
\end{equation}

where the second equality holds because $\xi$ vanishes on $B.$ 

From the first general equality, we see that the generator of a diffeomorphism - if existent - only depends on the value of the diffeomorphism $\xi$ and dynamical fields $\Phi$ near the boundary $B.$ This property was used in the proof of Lemma \ref{Lemma 2.3} and is now justified retrospectively.

In the third equality, we have used the expression \eqref{45} for the Noether-charge. Only the second term in \eqref{45} contributes. The first term does not contribute because $\xi$ vanishes on $B.$ The third term does not contribute because $\mathcal{L}_\xi \Phi |_B =0$ as $\Phi$ has well-defined values at $B.$ 

With respect to the coordinates used in \eqref{39}, the vectorfield $\xi$ can be written as

\begin{equation}
\xi = -i(x \partial_t + t \partial_x).
\label{47}
\end{equation}

Hence, we find on $B$

\begin{equation}
\nabla_{[c} \xi_{d]} = -i \varepsilon_{cd}
\label{48}
\end{equation}

with the binormal $\varepsilon_{cd}$ as in the assertion.

The requirement \eqref{46} on $G$ then reads

\begin{equation}
\delta G = -i \delta \left( \oint_B X^{cd}(\Phi) \varepsilon_{cd} \right)
\label{49}
\end{equation}

which directly proves the existence of $G.$ The form of $K$ then follows from \eqref{18} and this proves the assertion.     
\end{proof}

Theorem \ref{Theorem 3.1} gives an entropy bound for any diffeomorphism invariant field theory. The explicit construction prescribed by Theorem \ref{Theorem 3.1} is illustrated in the following example.

\begin{example} \label{Example 3.1}
We consider a diffeomorphism invariant field theory on a $d$-dimensional spacetime as required by Theorem \ref{Theorem 3.1}. We take the Lagrange-form \eqref{33} to be

\begin{equation}
L = \frac{1}{16 \pi G} \varepsilon (R-2\Lambda) + L_m \varepsilon.
\label{50}
\end{equation}

That is, we consider the Einstein-Hilbert Lagrangian with Newton's constant $G$ and allow for a cosmological constant $\Lambda.$ Furthermore, we require additional matter-fields $\psi$ to be minimally coupled

\begin{equation}
L_m = L_m(g_{ab}, \psi, \nabla_{a_1} \psi, \ldots, \nabla_{(a_1 \cdots} \nabla_{a_l)} \psi),
\label{51}
\end{equation}

i.e. there is no explicit dependence on the curvature tensor. In other words, this requirement means the validity of the equivalence principle. 

We take a $(d-2)$-dimensional surface $B$ as in Theorem \ref{Theorem 3.1}. Furthermore, we use a coordinate system $(t,x,y^1,\ldots,y^{d-2})$ as in \eqref{38} and \eqref{39}. 

We find for \eqref{36}

\begin{equation}
\delta_R L = -\frac{1}{16 \pi G} g^{ad} g^{bc} \varepsilon \delta R_{abcd}.
\label{52}
\end{equation}

Note that there is no contribution from the matter term or the cosmological constant in \eqref{50}. Hence, using \eqref{35} we obtain 

\begin{equation}
X^{cd} \varepsilon_{cd} = \frac{1}{16 \pi G} g^{ad} g^{bc} \varepsilon_{abc_3 \cdots c_d} \varepsilon_{cd}.
\label{53}
\end{equation}

Therefore, the integrand in \eqref{34} is

\begin{equation} \label{54}
\begin{split}
& 2\pi (X^{cd})_{y^1 \cdots y^{d-2}} \varepsilon_{cd} dy^1 \wedge \ldots \wedge dy^{d-2} \\
=& 2\pi \cdot \frac{1}{16 \pi G} g^{ad} g^{bc} \varepsilon_{aby^1 \cdots y^{d-2}} \varepsilon_{cd} dy^1 \wedge \ldots \wedge dy^{d-2} \\
=& \frac{1}{4G} \sqrt{\det (g_{ij})} dy^1 \wedge \ldots \wedge dy^{d-2}
\end{split}
\end{equation}

where the last equality follows from using the explicit form \eqref{38} and \eqref{39} at $B.$ Hence, we obtain for \eqref{34} by integrating over $B$ 

\begin{equation}
K = \frac{A}{4G}
\label{55}
\end{equation}

with $A$ denoting the area of $B$ in a particular state. The constant appearing in \eqref{34} has been fixed to $c=0$ by the requirements of Theorem \ref{Theorem 3.1}. From Theorem \ref{Theorem 3.1}, we then immediately conclude the following
\end{example}

\begin{corollary} \label{Corollary 3.1}
Let $(\Gamma, \Theta, H)$ be a diffeomorphism invariant field theory on a $d$-dimensional spacetime foliated by time-slices given by a hypersurface $\Sigma.$ The Lagrange-form is taken to be of the Einstein-Hilbert form \eqref{50} and \eqref{51}. Consider the possifold-flow $\partial \Sigma \to B$ for a codimension-$2$ surface $B.$ Let $V$ be the phase space volume of the set of states in $(\Gamma^{(B)}, \Theta^{(B)})$ with area of $B$ not bigger than a fixed $A.$ One has then

\begin{equation}
\ln (V) \le \frac{A}{4 \hbar G}.
\label{56}
\end{equation}  
\end{corollary}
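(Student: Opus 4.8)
\textbf{Proof proposal for Corollary \ref{Corollary 3.1}.}
The plan is to obtain the corollary as a direct specialization of Theorem \ref{Theorem 3.1} combined with the explicit computation carried out in Example \ref{Example 3.1}. First I would invoke Theorem \ref{Theorem 3.1} for the field theory at hand: since the Lagrange-form \eqref{50}--\eqref{51} is of the required form \eqref{33} (the Einstein-Hilbert term depends on the metric and the scalar curvature, i.e. a single contraction of $R_{abcd}$, while $L_m$ carries no curvature dependence), the theorem applies verbatim to the possifold-flow $\partial\Sigma\to B$ and yields the bound $\ln(V)\le\frac{\lambda}{\hbar}$ for the phase space volume of $K^{-1}((-\infty,\lambda])$, where $K$ is the functional \eqref{34}.

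The second step is to identify the functional $K$ explicitly in this case, which is exactly the content of Example \ref{Example 3.1}: one computes $E_R^{abcd}$ from \eqref{36} and finds \eqref{52}, observing that neither the cosmological constant nor the minimally coupled matter contributes to $\delta_R L$; feeding this into \eqref{35} and contracting with the binormal gives, after using the gauge-fixed near-$B$ metric \eqref{38}--\eqref{39}, the integrand $\frac{1}{4G}\sqrt{\det(g_{ij})}\,dy^1\wedge\cdots\wedge dy^{d-2}$ as in \eqref{54}. Integrating over $B$ produces $K=\frac{A}{4G}$, with the additive constant $c$ in \eqref{34} forced to vanish by the normalization requirements of Theorem \ref{Theorem 3.1} (namely $K\ge 0$ and $0$ in the image of $K$, which is consistent with $c=0$ since the area is nonnegative and can degenerate to zero).

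Finally I would substitute this identification into the conclusion of Theorem \ref{Theorem 3.1}. The set of states with $K\le\lambda$ is precisely the set of states with $\frac{A}{4G}\le\lambda$, i.e. with area of $B$ not bigger than $4G\lambda$; relabeling $A := 4G\lambda$ (equivalently $\frac{\lambda}{\hbar}=\frac{A}{4\hbar G}$), the bound $\ln(V)\le\frac{\lambda}{\hbar}$ becomes $\ln(V)\le\frac{A}{4\hbar G}$, which is \eqref{56}.

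I do not expect a genuine obstacle here, as the corollary is essentially bookkeeping on top of the theorem; the only point requiring a little care is checking that the hypotheses of Theorem \ref{Theorem 3.1} are actually met — in particular that the Lagrange-form \eqref{50}--\eqref{51} can be written in the form \eqref{33} and that the decomposition \eqref{45} of the Noether charge can be chosen with $X^{cd}$ as in \eqref{35} — but both are immediate for Einstein-Hilbert gravity with minimally coupled matter, since $R$ is a contraction of $R_{abcd}$ and $L_m$ is curvature-independent, so $E_R^{abcd}$ is simply the metric combination read off in \eqref{52}. The other subtlety worth a sentence is the fixing of the constant $c=0$, which follows because $A\ge 0$ with equality attainable, matching the sign and image conditions of Lemma \ref{Lemma 2.7} as already noted in Example \ref{Example 3.1}.
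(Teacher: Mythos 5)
Your proposal is correct and follows essentially the same route as the paper: the paper's own derivation is exactly the specialization of Theorem \ref{Theorem 3.1} via the computation in Example \ref{Example 3.1}, which yields $K=\frac{A}{4G}$ with $c=0$, after which the bound \eqref{56} is the substitution $\lambda=\frac{A}{4G}$ into \eqref{37}. Nothing further is needed.
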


\section{Discussion} 
\label{Kapitel 4}

Theorem \ref{Theorem 3.1} essentially states that the states with fixed value of the quantity \eqref{34} are contained in a finite volume of phase space. As, for instance, explained in chapter 4.3 of \cite{Averin:2024its}, this is expected from qualitative arguments involving black hole thermodynamics.

Indeed, the quantity \eqref{34} is known to appear in the context of black holes. Based on \cite{Wald:1993nt}, the first law of black hole mechanics was shown to hold in \cite{Iyer:1994ys} for any stationary black hole solution with bifurcate Killing-horizon. The quantity \eqref{34} plays hereby the role of the black hole entropy when evaluated on the black hole state with $B$ being the bifurcation surface.

Our contribution here is that the functional \eqref{34} has a much more general meaning. It appears in the general functional integral expression for reduced density matrices in diffeomorphism invariant theories as we showed in Lemma \ref{Lemma 2.3}. As stated in Lemma \ref{Lemma 2.4}, for the ground state, it plays the role of the modular Hamiltonian implying several entropy bounds. Lemma \ref{Lemma 2.5} gives an entropy bound for the quantum-mechanical operator associated to \eqref{34}. Theorem \ref{Theorem 3.1} gives a statement about \eqref{34} itself.

Our results here materialize in part the program outlined in \cite{Averin:2024its}. We refer to chapter 1 there for reviewing the program and chapter 4.3 for placing our results here in context and their meaning for future analysis. Especially, the need for a precise formulation of a gravitational entropy bound was pointed out there, as such a bound is a statement about quantum-mechanical sensitivity. We will analyze how the gravitational entropy bound restricts the sensitivity of observables and implies several other conjectured properties of quantum gravity at a different point. Remarkably, note that the equivalence principle implies \eqref{56} which suggests a lower sensitivity on short length scales than one might expect. 

As noted, the functional \eqref{34} appears in the expression for the reduced density matrix in Lemma \ref{Lemma 2.3}. It is then tempting to ask for a generalization of the Ryu-Takayanagi formula, i.e. an expression for the associated entanglement entropy. Indeed, our derivation shares some similarities with proofs of the Ryu-Takayanagi formula presented in \cite{Rangamani:2016dms,Dong:2013qoa,Dong:2017xht}. We will analyze this further at a different point.   

\section*{Acknowledgements}
We thank Alexander Gußmann for many discussions on this and other topics in physics.

\end{document}